\theoremstyle{plain}
\newtheorem{theo}{Theorem}%[section]
\newtheorem{lemma}[theo]{Lemma}
\newtheorem{coro}[theo]{Corollary}
\newcounter{Cl}[theo]
\newtheorem{cl}[Cl]{Claim}
\theoremstyle{definition}
\newtheorem{defi}[theo]{Definition}
\newtheorem{remark}{Remark}
\newtheorem{obs}[theo]{Observation}
\newcommand{\dom}[3]{\gamma_{(#1,#2)}(#3)}
\newcommand{\prob}[1]{\mathrm{Pr}\left[#1\right]}
\newcommand{\bin}[2]{\mathrm{Bin}(#1,#2)}
\title{
$(1,j)$-set problem in graphs
}
\author{
Arijit Bishnu
\footnote{
Advanced Computing and Microelectronics Unit,
Indian Statistical Institute,
Kolkata, India
\texttt{arijit@isical.ac.in, paulsubhabrata@gmail.com}
}
\and
Kunal Dutta
\footnote{
D1: Algorithms \& Complexity,
Max-Planck-Institut f\"ur Informatik,
%MPI for Informatics,
Saarbr\"ucken, Germany
\texttt{$\{$kdutta, agosh$\}$@mpi-inf.mpg.de}
}
\and
Arijit Ghosh
\footnotemark[2]
\and
Subhabrata Paul
\footnotemark[1]
}
\begin{document}

%\date{}

\maketitle

\begin{abstract}
A subset $D\subseteq V$ of a graph $G=(V,E)$ is a \emph{$(1,j)$-set} if every vertex $v\in V\setminus D$ 
is adjacent to at least $1$ but not more than $j$ vertices in $D$. The cardinality of a minimum $(1,j)
$-set of $G$, denoted as $\gamma_{(1,j)}(G)$,  is called the \emph{$(1,j)$-domination number} of $G$. 
Given a graph $G=(V,E)$ and an integer $k$, the decision version of the $(1,j)$-set problem is to decide 
whether $G$ has a $(1,j)$-set of cardinality at most $k$. In this paper, we first obtain an upper bound on 
$\gamma_{(1,j)}(G)$ using probabilistic methods, for bounded minimum and maximum degree graphs. 
Our bound is constructive, by the randomized algorithm of Moser and Tardos, 
We also show that the $(1,j)$-set problem is NP-complete for chordal graphs. Finally, 
we design two algorithms for finding $\gamma_{(1,j)}(G)$ of a tree and a split graph, for any fixed $j$, which 
answers an open question posed in \cite{12set}.

\paragraph{Keywords.}
Domination, $(1,j)$-set, NP-completeness, probabilistic methods, Chordal graphs
\end{abstract}

\section{Introduction}
\label{sec-introduction}

The concept of \emph{domination} and its variations is one of the most active area of research in graph 
theory because of its application in facility location problems, in problems involving finding a set of 
representatives, in monitoring communication or electrical networks, and in various other areas of practical 
applications (see \cite{Haynes1,Haynes2}). Over the years, many different variants of domination have been 
introduced and studied in the literature. The concept of $(i,j)$-set is a very interesting and recent 
variant of domination \cite{12set,YangWu}.

\subsection{Definitions}
\label{ssec-definitions}
For a natural number $m$, let $[m]$ denote the set $\{1,2,\ldots,m\}$.
Let $G=(V,E)$ be a graph. For $v \in V$, let $N_G(v)=\{u | uv \in E\}$ denote the open neighborhood of 
$v$ and $N_G[v]=N_G(v) \cup \{v\}$ denote the closed neighborhood of $v$. The degree of a vertex 
$v \in V$, denoted by $d_G(v)$, is the number of neighbors of $v$. Let $\Delta_G$ and $\delta_G$ denote the 
maximum and minimum degree of $G$. (We will remove the subscript $G$ where it is obvious from the context). 
Let $G[S]$ denote the subgraph induced by the vertex set $S$ on $G$. 
A \emph{tree} is a connected graph which has no cycle. A tree is called a \emph{rooted tree} if one of 
its vertices, say $r$, has been designated as the \emph{root}. The \emph{level} of a vertex is the number 
of edges along the unique path between it and the root. A set $S \subseteq V$ of a graph $G=(V,E)$ is an 
\emph{independent set} if no two vertices in $S$ are adjacent. If every pair of distinct vertices in 
$K\subseteq V$ are adjacent in $G$, then $K$ is called a \emph{clique}. A graph $G$ is \emph{chordal} if  
every cycle in $G$ of length at least four has a chord, that is, an edge between two non-consecutive 
vertices of the cycle. A graph $G=(V,E)$ is called a \emph{split graph} if $V$ can be partitioned into 
two sets, say $S$ and $K$, such that $S$ is an independent set and $K$ is a clique of $G$. Note that trees 
and split graphs are chordal graphs. A \emph{claw} is basically a $K_{1,3}$, a complete bipartite graph 
having one vertex in one partition and three vertices in the other partition. A vertex $u \in V$ is said 
to be dominated by a vertex $ v \in V$ if $u \in N_G[v]$. A set $D \subseteq V$ is called a \emph{dominating set} 
of $G$ if for every vertex $v \in V\setminus D$, $|N_G(v)\cap D|\geq 1$. The cardinality of a minimum dominating 
set of $G$ is called the \emph{domination number} of $G$ and is denoted by $\gamma(G)$. Note that, a dominating 
set $D$ dominates each vertex of $V\setminus D$ at least once. If, for some positive integer $i$, a dominating set 
$D_i$ dominates each vertex of $V\setminus D_i$ at least $i$ times, then $D_i$ is called a $i$-dominating set. 
A \emph{restrained dominating set} is a set $D_r\subseteq V$ where every vertex in $V\setminus S$ is adjacent to 
a vertex in $S$ as well as another vertex in $V\setminus S$. The cardinality of a minimum restrained dominating 
set of $G$ is called the \emph{restrained domination number} of $G$.

\subsection{Short review on $(i,j)$-set}
\label{ssec-short-review-1,j-set}

A set $D\subseteq V$ of a graph $G=(V,E)$ is called a \emph{$(i,j)$-set} if for every 
$v\in V\setminus D$, $i\leq |N_G(v)\cap D|\leq j$ for nonnegative integers $i$ and $j$, that is, every 
vertex $v\in V\setminus D$ is adjacent to at least $i$ but not more than $j$ vertices in $D$. The concept 
of $(i,j)$-set was introduced by Chellali et al. in \cite{12set}. Clearly, it is a generalization of the 
classical domination problem. Like domination problem, in this case, our goal is to find a $(i,j)$-set 
of minimum cardinality, which is called the \emph{$(i,j)$-domination number} of $G$ and is denoted by 
$\gamma_{(i,j)}(G)$. The decision version of $(i,j)$-set problem is defined as follows.

\vspace{10pt}

\noindent\underline{\textbf{$(i,j)$-Set problem ($(i,j)$-SET)}}
\begin{description}
  \item[Instance:] A graph $G=(V,E)$ and a positive integer $k\leq |V|$.
  \item[Question:] Does there exist a $(i,j)$-set $D$ of $G$ such that $|D|\leq k$?
\end{description}

In domination, we are interested in finding a set $D$ which dominates all the vertices of $V\setminus D$ 
at least once. But in some situation, we need to dominate each vertex at least $i$ times and at the same 
time, dominating a vertex more than $j$ times, might cause a problem. Basically, we are interested in 
finding a $i$-dominating set with a bounded redundancy. In these type of situations, we need the concept 
of $(i,j)$-set. Also, $(i,j)$-set is a more general concept which involves 
\emph{nearly perfect set} \cite{nearlyperfect}, \emph{perfect dominating set} \cite{perfect} (also known as 
\emph{$1$-fair dominating set} \cite{fair}) etc. as variants. There is a concept of \emph{set restricted domination} 
which is defined as follows: for each vertex $v\in V$, we assign a set $S_v$. A set $D_S$ is called a set restricted 
dominating set if for all $v\in V$, $|N_G[v]\cap D_S|\in S_v$. Note that if $S_v=[j]$ for all $v\in V$, 
we have a $(1,j)$-set. In that sense, $(1,j)$-set is a particular type of set restricted dominating set.

The concept of $(i,j)$-set has been introduced recently in 2013. Unlike other variations of domination, it 
has not been well studied until now. As per our knowledge, only two papers have 
appeared on $(i,j)$-set \cite{12set,YangWu}.
The main focus of \cite{12set} is on a particular $(i,j)$-set, namely $(1,2)$-set. In \cite{12set}, the 
authors have made a simple observation that for a simple graph $G$ with $n$ vertices, 
$\gamma(G)\leq \gamma_{(1,2)}(G)\leq n$. They have studied some graph classes for which these bounds are 
tight. They have shown that $\gamma(G)= \gamma_{(1,2)}(G)$ for claw-free graphs, $P_4$-free graphs, 
caterpillars etc. The authors have constructed a special type of split graph that achieves the upper bound. 
But there are some graph classes for which $\gamma_{(1,2)}(G)$ is strictly less than $n$. These graph classes 
involve graphs with maximum degree $4$, graphs having a $k$-clique whose vertices have degree either $k$ or 
$k+1$ etc \cite{12set}. In \cite{12set}, the authors have studied the $(1,3)$-set for grid graphs and showed 
that $\gamma(G)=\gamma_{(1,3)}(G)$. Using this result, they also showed that domination number is equal to 
restrained domination number. From complexity point of view, it is known that $(1,2)$-SET is NP-complete for 
bipartite graphs \cite{12set}. A list of open problems were posed in \cite{12set} indicating some research 
directions in this field. In \cite{YangWu}, the authors showed that some graphs with $\gamma_{(1,2)}(G)=n$ 
exist among some special families of graphs, such as planar graphs, bipartite graphs. These results answers 
some of the open problems posed in \cite{12set}. They also showed that for a tree $T$ with $k$ leaves, if 
$deg_G(v)\geq 4$ for any non-leaf vertex $v$, then $\gamma_{(1,2)}(T)=n-k$. Nordhaus-Gaddum-type inequalities 
are also established for $(1,2)$-set in \cite{YangWu}.

The main focus of \cite{12set} and \cite{YangWu} is $(1,2)$-set. In this paper, we study the more general 
set, namely $(1,j)$-set. Apart from the open problems mentioned in \cite{12set}, a bound on the 
$(1,j)$-domination number for general graphs is important. A general bound and its construction forms a 
major thrust of this paper, which is presented in Section $2$. In Section $3$, we tighten the hardness 
result by showing that $(1,j)$-set problem is NP-complete for chordal graphs. In Section $4$, we propose 
two polynomial time algorithms that calculate minimum $(1,j)$-domination number for trees and split graphs, 
that solves an open problem mentioned in \cite{12set}. Finally, Section $5$ concludes the paper.

\section{Upper bounds}
\label{sec-upper-bounds}

In this section, we shall prove an upper bound on the $(1,j)$-domination
number, i.e. $\gamma_{(1,j)}(G)$, of any graph $G=(V,E)$, having bounded minimum 
and maximum degree, for `sufficiently large' $j$.

In~\cite{AlonSpencer}, Alon and Spencer describe a similar 
upper bound on the domination 
number $\gamma(G)$, using probabilistic methods. Their strategy, (a classic 
example of the `alteration technique' in probabilistic methods), was to select 
a random subset $X$ of vertices as a partial dominating set, and then to 
include the set $Y$ of vertices not dominated by $X$, to get the final 
dominating set. However, such a strategy is \emph{a priori} not applicable for 
$(1,j)$-domination, because including or excluding vertices from the dominating 
set could change the number of dominating vertices adjacent to some vertex. 
Instead, we shall use a one-step process, and analyze it using the 
Lov\'{a}sz Local Lemma and Chernoff bounds to ensure that the conditions 
for $(1,j)$-set holds. 
Our proof also implies a randomized algorithm, using the Moser-Tardos constructive 
version of the Local Lemma~\cite{MoserTardos}, which would give a polynomial-time 
algorithm for obtaining a $(1,j)$-dominating set.

We first state two well-known results, Chernoff bound and Lov\'{a}sz Local Lemma, in a form suitable 
for our purposes. These results can be found in any standard text on 
probabilistic combinatorics, e.g.~\cite{AlonSpencer}. 

\begin{theo}[Chernoff bound]
Suppose $X$ is the sum of $n$ independent variables, each equal to $1$ with 
probability $p$ and 0 otherwise. Then for any $0 \leq \alpha$,
$$
	\prob{X > (1+\alpha) np} < \exp({-f(\alpha)np}),
$$
where 
$f(\alpha) = (1+\alpha)\ln(1+\alpha)-\alpha$.
\end{theo}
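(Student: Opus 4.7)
The plan is to use the exponential moment method (Markov's inequality applied to $e^{tX}$), which is the textbook route to a Chernoff-type upper-tail bound for sums of independent Bernoulli variables. Writing $X = \sum_{i=1}^{n} X_i$ where the $X_i$ are i.i.d.\ Bernoulli$(p)$, for any parameter $t > 0$ I would start from
$$\prob{X > (1+\alpha)np} = \prob{e^{tX} > e^{t(1+\alpha)np}} \leq \frac{\mathbb{E}[e^{tX}]}{e^{t(1+\alpha)np}},$$
which is just Markov's inequality applied to the non-negative random variable $e^{tX}$.

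Next I would use independence to factor $\mathbb{E}[e^{tX}] = \prod_{i=1}^{n} \mathbb{E}[e^{tX_i}] = (1 - p + p e^t)^n$, and then apply the elementary inequality $1 + x \leq e^x$ with $x = p(e^t - 1)$ to obtain $1 - p + p e^t \leq e^{p(e^t - 1)}$. Raising to the $n$-th power gives $\mathbb{E}[e^{tX}] \leq e^{np(e^t - 1)}$, so that
$$\prob{X > (1+\alpha)np} \leq \exp\bigl(np(e^t - 1) - t(1+\alpha)np\bigr)$$
for every $t > 0$. The final step is to minimise the right-hand side over $t$: differentiating the exponent and setting the derivative to zero yields the optimiser $t^{\ast} = \ln(1+\alpha)$, which is admissible since $\alpha \geq 0$. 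Substituting back collapses the exponent to $np\bigl((1+\alpha) - 1 - (1+\alpha)\ln(1+\alpha)\bigr) = -np\bigl((1+\alpha)\ln(1+\alpha) - \alpha\bigr) = -f(\alpha)\,np$, yielding exactly the claimed bound.

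The calculation is standard and presents no real obstacle; the one mild nuisance is the boundary case $\alpha = 0$, at which $f(\alpha) = 0$ and the optimiser $t^{\ast} = 0$ makes the derivation above degenerate. In that case the asserted inequality reduces to $\prob{X > np} < 1$, which is trivial whenever $p < 1$ (and if $p = 1$ the statement is vacuous as $X = n$ almost surely). The strict inequality for $\alpha > 0$ can be obtained either by noting that Markov's inequality is strict for non-constant random variables, or simply by perturbing $t^{\ast}$ infinitesimally and using continuity.
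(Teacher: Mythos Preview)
Your proof is correct and is precisely the standard exponential moment derivation of the Chernoff upper-tail bound. The paper, however, does not prove this statement at all: it simply records the Chernoff bound as a well-known tool and cites a standard reference (Alon and Spencer), so there is no ``paper's own proof'' to compare against. Your handling of the boundary case $\alpha = 0$ is also fine.
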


\begin{lemma}[Lov\'{a}sz local lemma] \label{lemmalovaszlocal}
    Let $\mathcal{A} = \{E_1,E_2,...,E_m\}$ be a collection of events 
    over a probability space such that each 
    $E_i$ is totally independent of all but the events in 
    $\mathcal{D}_i \subseteq \mathcal{A}\backslash \{E_i\}$. \\
    If there exists a real sequence $\{x_i\}_{i=1}^m$, $x_i \in [0,1)$, such 
    that 
\begin{eqnarray*}
    \forall i \in [m]\mbox{, } \prob{E_i} &\leq& x_i\prod_{j:E_j\in \mathcal{D}_i}
(1-x_j)\mbox{, then} \\
    \prob{\bigcap_{i=1}^m \overline{E}_i} &\geq& \prod_{i=1}^m (1-x_i) > 0. 
\end{eqnarray*} 
In particular, if for 
all $i$, $|\mathcal{D}_i| = d$ and $\prob{E_i} \leq p$, then, if $ep(d+1) \leq 1$,
then 
$$
	\prob{\bigcap_i \bar{E_i}} > \exp\left(-\frac{m}{d+1}\right).
$$
\end{lemma}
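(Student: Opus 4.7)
The plan is to prove the general LLL statement by induction, and then derive the symmetric version from it by a short calculation.

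The central claim to establish inductively is: for every $i \in [m]$ and every $S \subseteq [m]\setminus\{i\}$,
\[
\prob{E_i \mid \bigcap_{j \in S}\overline{E}_j} \leq x_i.
\]
The induction is on $|S|$. The base case $S = \emptyset$ is immediate from the hypothesis on $\prob{E_i}$, since each factor $(1-x_j)$ lies in $(0,1]$. For the inductive step, partition $S = S_1 \cup S_2$ where $S_1 = \{j \in S : E_j \in \mathcal{D}_i\}$ and $S_2 = S \setminus S_1$. If $S_1 = \emptyset$, then $E_i$ is independent of the entire conditioning event and the bound follows at once. Otherwise, write the conditional probability by Bayes as a ratio. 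The numerator $\prob{E_i \cap \bigcap_{j\in S_1}\overline{E}_j \mid \bigcap_{j\in S_2}\overline{E}_j}$ is bounded above by $\prob{E_i \mid \bigcap_{j\in S_2}\overline{E}_j} = \prob{E_i} \leq x_i\prod_{j \in \mathcal{D}_i}(1-x_j)$, using independence of $E_i$ from the events indexed by $S_2$. The denominator is handled by enumerating $S_1 = \{j_1,\dots,j_r\}$, expanding by the chain rule, and applying the inductive hypothesis (valid because each resulting conditioning set has size at most $|S|-1$) to each factor, yielding the lower bound $\prod_{k=1}^r (1-x_{j_k}) \geq \prod_{j \in \mathcal{D}_i}(1-x_j)$. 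The two products cancel and leave $x_i$, closing the induction.

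Once the main claim holds, a second chain-rule decomposition over all $m$ events gives
\[
\prob{\bigcap_{i=1}^m \overline{E}_i} = \prod_{i=1}^m \prob{\overline{E}_i \mid \bigcap_{j<i}\overline{E}_j} \geq \prod_{i=1}^m (1-x_i) > 0,
\]
which is the first conclusion of the lemma. For the symmetric version, I would set $x_i = 1/(d+1)$ uniformly. The general hypothesis then becomes $p \leq \frac{1}{d+1}\bigl(1-\frac{1}{d+1}\bigr)^d$; using $(1+1/d)^d < e$ one checks that the right-hand side exceeds $\frac{1}{e(d+1)}$, so the assumption $ep(d+1) \leq 1$ suffices. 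Substituting $x_i = 1/(d+1)$ into $\prod_i (1-x_i)$ then yields the stated exponential lower bound.

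The main obstacle I expect is the inductive step — specifically, verifying that when the chain rule is applied in the denominator the conditioning set of each factor is a strict subset of $S$, so that the inductive hypothesis genuinely applies at a smaller size. The partition of $S$ by $\mathcal{D}_i$-membership is exactly what makes this work: on the $S_2$ side we invoke exact independence of $E_i$ to bound the numerator by $\prob{E_i}$, while on the $S_1$ side the telescoped inductive bounds combine with the hypothesis-product $\prod_{j\in\mathcal{D}_i}(1-x_j)$ to cancel cleanly and leave $x_i$.
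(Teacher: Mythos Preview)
The paper does not prove this lemma at all: it is stated as a standard tool and referred to a textbook (Alon--Spencer). So there is no ``paper's own proof'' to compare against; you are supplying a proof where the paper simply cites one. Your argument is the standard inductive proof of the general LLL, and the inductive mechanism you describe (partition $S$ by membership in $\mathcal{D}_i$, bound the numerator via independence from the $S_2$-events, bound the denominator via the chain rule and the inductive hypothesis on strictly smaller conditioning sets) is correct and is exactly the textbook route.

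One small caveat on the symmetric conclusion. With $x_i = 1/(d+1)$ you get $\prod_i(1-x_i) = \bigl(\tfrac{d}{d+1}\bigr)^m$, but $\bigl(\tfrac{d}{d+1}\bigr)^m$ is actually \emph{less} than $\exp(-m/(d+1))$, since $\ln(1+1/d) > 1/(d+1)$. So the substitution does not literally ``yield the stated exponential lower bound'' with a strict inequality; what one gets is $\bigl(1-\tfrac{1}{d+1}\bigr)^m$, which is only \emph{approximately} $\exp(-m/(d+1))$. The paper itself treats this loosely (in the proof of the later claim it writes $(1-\tfrac{1}{\Delta^2+1})^n \approx \exp(-n/(\Delta^2+1))$), so the discrepancy is in the paper's statement rather than in your argument, but you should not claim a strict $>$ against $\exp(-m/(d+1))$ from this substitution alone.
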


Before stating the main theorem, we need some definitions:
Given $\alpha \in \Re^+$, let 
$$
	f(\alpha) \stackrel{\rm def}{=} (1+\alpha)\ln (1+\alpha) -\alpha.
$$
Also let $s(\alpha) \stackrel{\rm def}{=} \min\{1,f(\alpha)\}$ and for $\Delta \in \mathbb{Z}^+$, let 
$$
	g(\Delta) \stackrel{\rm def}{=} \ln (2e(\Delta^2+1)) =
	1+\ln 2+ 2\ln \Delta + o_\Delta(1),
$$
where $e$ is the base of the natural logarithm.

\begin{theo}\label{thmubd}
  Given $j\in \mathbb{Z}^+$, let $\alpha >0$ be the maximum real number
such that 
$$ 
j+1 \geq (1+\alpha)\frac{\Gamma\, g(\Delta)}{s(\alpha)} \; \; \mbox{where} 
\; \; \Gamma = \frac{\Delta}{\delta}.
$$
Then, if such an $\alpha$ exists, 
$$\gamma_{(1,j)}(G) \leq (1+o_\Delta(1))\frac{g(\Delta)}{s(\alpha)\delta}n 
\leq (1+o_\Delta(1))\left(\frac{1+\ln 2 + 2\ln \Delta}{s(\alpha)\delta}\right)n \, .$$
Further, there is a randomized algorithm to obtain a $(1,j)$-dominating set of 
size at most $\frac{ng(\Delta)}{\delta s(\alpha)}$ that has expected runtime $O(n)$.
\end{theo}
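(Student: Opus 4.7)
My plan is a one-shot probabilistic construction. Form $D$ by including each vertex $v\in V$ independently with probability $p:=g(\Delta)/(\delta\,s(\alpha))$, so that $E[|D|]=np$ already matches the target bound. For every vertex $v$, define two bad events
\[
A_v := \{v\notin D,\ |N(v)\cap D|=0\}, \qquad B_v := \{v\notin D,\ |N(v)\cap D|>j\},
\]
and note that avoiding every $A_v$ and $B_v$ is equivalent to $D$ being a $(1,j)$-set. My aim is to force both ``no bad event'' and ``$|D|$ close to $np$'' to hold simultaneously with positive probability.

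The main technical step is to verify $\Pr[A_v],\Pr[B_v]\le e^{-g(\Delta)}=1/(2e(\Delta^2+1))$, which is exactly the symmetric LLL threshold: each $A_v$ or $B_v$ depends only on the inclusion of vertices in $N[v]$, so two events at $u,v$ interact only if $u$ and $v$ lie within graph-distance two, keeping the dependency degree below $2(\Delta^2+1)$. For $A_v$: $\Pr[A_v]\le(1-p)^{d(v)}\le e^{-p\delta}=e^{-g(\Delta)/s(\alpha)}\le e^{-g(\Delta)}$, using $s(\alpha)\le 1$. For $B_v$: the hypothesis $j+1\ge(1+\alpha)\Gamma g(\Delta)/s(\alpha)$ rewrites as $j+1\ge(1+\alpha)\Delta p\ge(1+\alpha)d(v)p$, so the Chernoff bound gives $\Pr[B_v]\le\exp(-f(\alpha)d(v)p)\le\exp(-f(\alpha)g(\Delta)/s(\alpha))=e^{-g(\Delta)}$, where the final equality is handled case-wise by $s(\alpha)=\min\{1,f(\alpha)\}$ (both $f(\alpha)\le 1$ and $f(\alpha)>1$ land on the same bound). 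The symmetric Lov\'asz Local Lemma then yields $\Pr[\bigcap_v(\overline{A_v}\cap\overline{B_v})]>0$, and more precisely at least $\exp(-O(n/\Delta^2))$.

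To upgrade this existence statement into the quantitative size bound I intersect with $T_\epsilon:=\{|D|\le(1+\epsilon)np\}$ for a suitably small $\epsilon=o_\Delta(1)$, using the Chernoff tail $\Pr[T_\epsilon^c]\le\exp(-f(\epsilon)np)$; since $np=ng(\Delta)/(\delta s(\alpha))$ with $g(\Delta)=\Theta(\ln\Delta)$ grows much faster than $n/\Delta^2$ as $\Delta\to\infty$, one can pick $\epsilon=o_\Delta(1)$ making this tail dominate the LLL lower bound, so the two events co-occur with positive probability and the bound $\gamma_{(1,j)}(G)\le(1+o_\Delta(1))ng(\Delta)/(\delta s(\alpha))$ follows. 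For the constructive version I apply the Moser--Tardos resampling algorithm to the same event system with $x_i=1/(2(\Delta^2+1))$: the expected number of resamplings per event is $x_i/(1-x_i)=O(1/\Delta^2)$, each resampling touches $O(\Delta)$ random variables, and the total expected running time is $O(n)$. The main subtlety I anticipate is making the Chernoff exponent for $B_v$ land exactly on the LLL threshold---this is the entire point of the definition $s(\alpha)=\min\{1,f(\alpha)\}$---together with choosing $\epsilon$ finely enough to keep the size-deviation tail dominated by the rather weak LLL lower bound on the no-bad-event probability.
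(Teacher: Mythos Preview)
Your proposal is correct and follows essentially the same route as the paper: random sampling at probability $p=g(\Delta)/(\delta s(\alpha))$, bounding the under- and over-domination probabilities by $1/(2e(\Delta^2+1))$ via $s(\alpha)\le 1$ and the Chernoff bound respectively, applying the symmetric Lov\'asz Local Lemma over the distance-two dependency graph, then intersecting with a Chernoff size-deviation event and invoking Moser--Tardos for the algorithm. The only cosmetic difference is that the paper keeps $A_v$ and $B_v$ merged into a single event $E_v$ per vertex (dependency degree $\Delta^2$, threshold $1/(e(\Delta^2+1))$) rather than treating them as separate events, and it makes the choice $\varepsilon=\sqrt{c\delta}/\Delta$ explicit where you left it as ``some $\epsilon=o_\Delta(1)$''.
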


\begin{proof}
Let $D \subset V$ be a subset of vertices obtained by tossing a coin for each vertex $v\in V$ 
independently and randomly with probability $p = \frac{g(\Delta)}{\delta s(\alpha)}$ and choosing 
$v$ if the coin comes up Heads. We shall show using the Local Lemma, that the subset $D$ is a 
$(1,j)$-dominating set with non-zero probability.

For each vertex $v \in V$, let $E_v$ be the event that $v$ is not $(1,j)$-dominated by 
$D$, i.e. that $v \not\in D$, and $|N(v) \cap D|\not\in [j]$. We need to show
that 
$$
	\prob{\bigcap_{v \in V}\bar{E_v}} > 0.
$$

In order to use the Local Lemma, consider the dependency graph formed 
by having the set of events $\{E_v\}_{v\in V}$ as vertices. Events $E_u$, $E_v$ $(u,v \in V)$
are dependent if and only if their outcomes depend on at least one common coin toss. Then clearly, the
events $E_u$, $E_v$ will be dependent if and only if 
$$
	N[u] \cap N[v] \neq \emptyset.
$$

This is possible only if the vertices $u$ and $v$ are at a distance at most 
$2$ from each other in the graph $G$. Hence, the degree of the dependency graph is at most 
$\Delta^2$. Now applying the symmetric form of the Local Lemma, we get that 
$$
	\prob{\bigcap_{v\in V} \bar{E_v}} > 0 \quad \mbox{if} \quad \prob{E_v} \leq \frac{1}{e (\Delta^{2}+1)}.
$$
We also 
need to bound the size of the selected subset $|D|$. However, this can easily be obtained
by applying a Chernoff bound to the output of the Local Lemma.
%from the conditions satisfied by the output of the Local Lemma.
The proof of Theorem~\ref{thmubd} is therefore completed with the following 2 claims:
Let $X \stackrel{\rm def}{=} |D|$. With $p$, $\alpha$ and $E_v$ defined as above, 

\begin{cl} \label{clmlllreqmt}
   For all $v \in V$, 
   $$
	\prob{E_v} \leq \frac{1}{e(\Delta^2+1)}.
   $$
\end{cl}

\begin{cl} \label{clmchernoffoverall}
   For any $\varepsilon >\frac{1}{\sqrt{\Delta}}$, there exists $\Delta_0 \in \mathbb{Z}^+$, such 
   that for all $\Delta\geq \Delta_0$, we have
   $$\prob{ \Big( X < (1+\varepsilon)np \Big) \bigcap \left(\bigcap_{v\in V} \bar{E_v}\right)} > 0.$$
   %$$\prob{X > n(1+\alpha)\ln \Delta/\delta | \cap_{v\in V} \bar{E_v}} = 0.$$
\end{cl}
Thus the set $D$ is of size at most 
$$
	(1+o_\Delta(1))n\left(\frac{1+\ln 2+2\ln \Delta}{\delta}\right),
$$ 
and every vertex in $V\setminus D$ has at least $1$ and at most $j$ neighbours in $D$. The bound on 
$\gamma_{(1,j)}(G)$ follows. 

We will elaborate on the randomized algorithm, via Moser-Tardos's local lemma implementation,
for obtaining such a $(1,j)$-dominating set in 
Remark~\ref{remark-2} at the end of this section.
%
%We do not elaborate on the algorithm for obtaining such a $(1,j)$-dominating set,
%save to remark that it follows from a direct application of the Moser-Tardos algorithm.
\end{proof}

It only remains to prove the Claims~\ref{clmlllreqmt} and~\ref{clmchernoffoverall}.
\begin{proof}[Proof of Claim~\ref{clmlllreqmt}]
Given any vertex $v\in V$, define $X_v = |N(v)\cap D|$, and let $F_v$ denote the event that $X_v \not\in [1,j]$.
Then we have that 
\begin{eqnarray*}
   \prob{E_v} &=& \prob{E_v|F_v}.\prob{F_v} + \prob{E_v|\bar{F_v}}.\prob{\bar{F_v}} \\
   &=& (1-p)\prob{F_v} + 0 
\end{eqnarray*}

We shall prove the stronger condition : $\prob{\bigcap_{v\in V} \bar{F_v}} >0$.
Now, 
$$ \prob{F_v} = \prob{X_v < 1} + \prob{X_v > j} .$$
Observe that $X_v$ has the binomial distribution $\bin{d(v)}{p}$.

Note that, if $j > d(v)$, then the event $F_v$ occurs only when $X_v = 0$ and for $j \leq d(v)$, the event 
$F_v$ can occur when $X_v=0$ or when $X_v > j$. Therefore, 
\begin{equation}
   \prob{F_v} = \left\{ \begin{array}{ll}
                        (1-p)^{d(v)} & \mbox{ if } j > d(v) \\
                        (1-p)^{d(v)} + \prob{X_v > j} & \mbox{ if } j \leq d(v)
                        \end{array} \right. 
\end{equation}

By the premise of the Theorem,
we get that 
$$ j+1 \geq (1+\alpha)\frac{\Gamma g(\Delta)}{s(\alpha)} \geq (1+ \alpha)\frac{g(\Delta)}{s(\alpha)},$$
and hence, substituting the value of $p$, we get
$$ 
	f(\alpha)d(v)p \geq \frac{f(\alpha)\, g(\Delta)}{s(\alpha)} \geq g(\Delta),
$$
since $f(\alpha) \geq s(\alpha)$.
Substituting in the expression for $f(v)$ gives
$$ (1-p)^{d(v)} = e^{d(v)\ln (1-p)} \leq e^{-d(v)p} \leq \frac{1}{2e(\Delta^2+1)},$$
since $d(v) \geq \delta$.
To compute $\prob{X_v \geq j+1}$, we use the Chernoff bound:
\begin{eqnarray*}
   \prob{X_v \geq j+1} &\leq& \prob{\bin{d(v)}{p} \geq j+1} \\
   &\leq& \prob{\bin{d(v)}{p} \geq (1+\alpha)d(v)p} \\
   &\leq& \exp(-f(\alpha)d(v)p) \\
   &\leq& \frac{1}{2e(\Delta^2+1)}
\end{eqnarray*}
where the last inequality follows from the choice of $p$.

Therefore, we get that
\begin{eqnarray*}
  \prob{F_v} = \prob{X_v < 1} + \prob{X_v > 1}
  \leq \frac{2}{2e(\Delta^2+1)} 
\end{eqnarray*}
and hence that $\prob{F_v} \leq \frac{1}{e(\Delta^2+1)}$.
\end{proof}

\begin{proof}[Proof of Claim~\ref{clmchernoffoverall}]
To show that $\prob{A \cap B} > 0$ where $A,B$ are events in a probability space,
it suffices to 
show that
$$ 
	\prob{\bar{A} \cup \bar{B}} \leq \prob{\bar{A}} + \prob{\bar{B}} <1,\; \mbox{i.e.,}\;
	\prob{B}-\prob{\bar{A}} >0.
$$
Taking $A$ to be the event $(X < (1+\varepsilon)np)$ and $B$ to be $\left(\bigcap_{v\in V}\bar{E_v}\right)$, we 
shall first upper bound $\prob{A}$, and then use the lower bound on 
$\prob{B}$ from the Local Lemma.
Using Chernoff bound, we get that
\begin{eqnarray*}
 \prob{X \geq (1+\varepsilon)np} 
 &\leq& \exp{\left(-\frac{\varepsilon^2np}{3}\right)} \\
 &\leq& \exp{\left(-\frac{\varepsilon^2ng(\Delta)}{3\delta}\right)} 
\end{eqnarray*}
Now, from the Local Lemma, we get that 
\begin{eqnarray*}
	\prob{\bigcap_{v\in V}\bar{E_v}} &>& \left(1- \frac{1}{\Delta^2+1}\right)^{n} \\ 
	&\approx& \exp\left(-\frac{n}{\Delta^2+1}\right).
\end{eqnarray*}
Let $\varepsilon = \frac{\sqrt{c\delta}}{\Delta}$, where $c>0$ is any constant. Then we get that for sufficiently 
large $\Delta$, 
\begin{eqnarray*}
  \prob{B}- \prob{\bar{A}} &\geq& \exp\left(-\frac{n}{\Delta^2+1}\right) - \exp\left(\frac{-\varepsilon^2ng(\Delta)}{3\delta}\right) \\
   &>& 0
\end{eqnarray*}
since for $\varepsilon = \frac{\sqrt{c\delta}}{\Delta}$, we get that 
$$
	\frac{\varepsilon^2ng(\Delta)}{3\delta} \geq \frac{cn\ln\Delta}{3\Delta^2} > \frac{n}{\Delta^2+1}.
$$
\end{proof}
In particular, taking $\alpha=e-1$ and $G$ $d$-regular, we get: 
\begin{coro}\label{cor-1,j-dom-d-regular}
 If $G$ is a $d$-regular graph, and $j > eg(d)$ then 
 $$
	\dom{1}{j}{G} \leq (2+o_{d}(1))\frac{n \ln d}{d}.
 $$
\end{coro}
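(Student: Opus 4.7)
The plan is to specialize Theorem~\ref{thmubd} to the $d$-regular setting and check that the specific choice $\alpha = e - 1$ is permitted by the hypothesis $j > eg(d)$. This is essentially a plug-in computation, so the main work is verifying that $\alpha = e-1$ satisfies the admissibility condition of the theorem and then simplifying the resulting bound.

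First I would observe that for a $d$-regular graph we have $\delta = \Delta = d$, so $\Gamma = \Delta/\delta = 1$. Next I would evaluate the function $f$ at $\alpha = e - 1$: since $1 + \alpha = e$, we get
\[
f(e-1) = e \ln e - (e-1) = e - (e-1) = 1,
\]
so $s(e-1) = \min\{1, f(e-1)\} = 1$. With these values, the admissibility condition from Theorem~\ref{thmubd},
\[
j + 1 \;\geq\; (1+\alpha)\,\frac{\Gamma\, g(\Delta)}{s(\alpha)} \;=\; e\, g(d),
\]
is implied by the hypothesis $j > e g(d)$. Hence $\alpha = e-1$ is an admissible value (possibly not the maximum, but the theorem's bound applies at any admissible $\alpha$).

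Finally I would plug these values into the conclusion of Theorem~\ref{thmubd}:
\[
\dom{1}{j}{G} \;\leq\; (1 + o_d(1))\,\frac{g(d)}{s(e-1)\,\delta}\, n \;=\; (1 + o_d(1))\,\frac{g(d)}{d}\, n.
\]
Since $g(d) = 1 + \ln 2 + 2\ln d + o_d(1) = (2 + o_d(1))\ln d$, absorbing the lower order terms into the $o_d(1)$ factor yields
\[
\dom{1}{j}{G} \;\leq\; (2 + o_d(1))\,\frac{n \ln d}{d},
\]
which is precisely the claimed bound.

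I do not anticipate any real obstacle here: the only point requiring a sanity check is that $\alpha = e - 1$ is the natural choice making $f(\alpha) = 1$ (so that $s(\alpha)$ reaches its cap and the leading constant is optimized), and that the threshold $j > e g(d)$ is exactly what the admissibility inequality demands after substituting $\Gamma = 1$. Everything else is direct substitution and asymptotic simplification.
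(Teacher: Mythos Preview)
Your proposal is correct and matches the paper's approach exactly: the paper simply remarks that the corollary follows by taking $\alpha = e-1$ and $G$ $d$-regular in Theorem~\ref{thmubd}, and you have spelled out precisely this specialization, including the key computation $f(e-1)=1$ so that $s(e-1)=1$ and the admissibility condition reduces to $j+1 \geq e\,g(d)$.
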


\begin{remark}\label{remark-1}
We remark that from the known lower bounds on the domination number of random graph, our results 
can be seen to be tight up to constant multiplicative factors, since $\gamma_{(1,j)}(G) \geq \gamma(G)$. 
For instance, the result of Glebov, Liebenau and Szab\`{o}~\cite{GlebovLiebenauSzabo}, implies 
that there exist graphs $G$ on $n$ vertices such that their domination number 
$\gamma(G) \geq \frac{n\, \log d}{d}$.
\end{remark}

\begin{remark}\label{remark-2}
 Elaboration on the Moser-Tardos's (MT) implementation: 
 We set up a SAT formula for domination, where each vertex $v_i \in V$ corresponds to a variable $x_i$ and 
 there is a clause $C(v)$ corresponding to each neighbourhood $N(v)$. $x_i= \mbox{``true''}$ means the vertex $v_i$ is 
 selected in the dominating set. Clause $C(v)$ is said to have failed if it is not satisfied 
 in the given assignment. In addition, for each vertex $v \in V$ having degree $d(v) > j$, there is a unique clause for
 every subset of $N(v)$ of size $j+1$, which fails only if all the vertices in the corresponding subset of $N(v)$ are 
 selected in the dominating set. Now we run the MT algorithm on this formula, 
 (i.e. take a random assignment where 
 each variable is set to true independently with the probability $p$ used in the proof; choose an arbitrary 
 failing clause and randomly reset all variables inside the clause; repeat until all clauses are satisfied). 
 The LLL condition guarantees that the MT algorithm will terminate in expected time linear in the number of clauses, i.e. $O(n\Delta^{j+1}) 
 = O(n^{j+2})$, and when this happens, the Chernoff bound guarantees that with high probability, not more 
 than $(1+o_\Delta(1))\frac{g(\Delta)n}{s(\alpha)\delta}$ many variables will be set to $\mbox{``true''}$.
\end{remark}

\section{NP-complete for chordal graphs}
\label{sec-NP-complete-chordal-graphs}

In this section, we show that $(1,j)$-SET is NP-complete when restricted to chordal graphs. Note that, 
for $j=1$, the problem is basically perfect domination problem, which is known to be NP-complete for 
chordal graphs \cite{perfectchordal}. For $j\geq 2$, we prove the NP-completeness by using a reduction 
from Exact $3$-Cover problem (EX$3$C), which is known to be NP-complete \cite{garey}. 

%\clearpage
\vspace{10pt}

%Exact $3$-Cover problem is defined as follows.
\noindent\underline{\textbf{Exact $3$-Cover problem (EX$3$C)}}
\begin{description}
  \item[Instance:] A finite set $X$ with $|X|=3q$, where $q$ is a positive integer and a collection $C$ of $3$-element subsets of $X$.
  \item[Question:] Is there a subcollection $C'$ of $C$ such that every element of $X$ appears in exactly one element of $C'$?
\end{description}

\begin{theo}\label{thm-NP-harness-chordal-graphs}
	$(1,j)$-SET is NP-complete for chordal graphs.
\end{theo}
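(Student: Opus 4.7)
The plan is to reduce from Exact $3$-Cover (EX3C). Membership in NP is immediate: given a candidate $D$, one verifies $|D| \leq k$ and the $(1,j)$-count at every $v \notin D$ in polynomial time. Given an EX3C instance $(X, C)$ with $|X| = 3q$ and $|C| = m$, I would build a chordal graph $G$ and take $k = mj + q$. For each $c_j \in C$ I would attach a \emph{set gadget}: a $(j{+}1)$-clique $K_j = \{v_j, u_j^1, \ldots, u_j^j\}$ in which each $u_j^l$ carries two private pendants. The pair of pendants makes $u_j^l \in D$ strictly cheaper than putting both pendants into $D$, so every $(1,j)$-set absorbs at least $mj$ vertices inside these gadgets. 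The head $v_j$ is joined to each $x_i \in c_j$, and $v_j \in D$ encodes selection of $c_j$ into the cover. To chordalize the bipartite $X$--$\{v_j\}$ portion without over-dominating the heads, I would also promote each $c_j$ to a triangle on $X$.

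For the forward direction, an exact cover $C' \subseteq C$ gives the $(1,j)$-set $D = \{u_j^l\}_{j,l} \cup \{v_j : c_j \in C'\}$ of size $mj + q$: each $x_i \notin D$ has exactly one $v_j$-neighbour in $D$, each $v_j \notin D$ is dominated precisely $j$ times by its forced clique-mates, and the pendants are covered by their $u_j^l$. Conversely, any $(1,j)$-set $D$ of size at most $k$ must take at least $mj$ vertices from the gadgets (with equality forcing every $u_j^l \in D$), leaving a residual budget of at most $q$ distributed across $V' = D \cap \{v_j\}$ and $X' = D \cap X$. The constraint $|N(v_j) \cap D| \leq j$ at each $v_j \notin D$ forces $X' \cap c_j = \emptyset$, hence every element of $X'$ lies only in selected sets; a counting argument (each $x_i \notin D$ needs a $V'$-neighbour, giving $\sum_{v_j \in V'} |c_j \setminus X'| \geq 3q - |X'|$) combined with $|V'| + |X'| \leq q$ forces $|X'| = 0$ and $|V'| \geq q$, with equality only when the chosen heads form an exact cover.

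The principal obstacle will be uniform chordality. Placing $\{v_1, \ldots, v_m\}$ into one big clique --- the easy fix for the bipartite $4$-cycles --- would over-dominate each $v_j \notin D$ by adding the other selected heads on top of its $j$ forced clique-mates, easily exceeding $j$. My triangle-on-$X$ substitute handles $4$-cycles arising from two sets sharing two elements, but EX3C instances whose $3$-sets form a ``cycle of triangles'' (for example $\{1,2,3\}, \{3,4,5\}, \{5,6,7\}, \{7,8,1\}$, which leaves the chordless $4$-cycle $1{-}3{-}5{-}7$ in the $X$-subgraph) require additional chord-breaking. I plan to resolve these with small auxiliary vertices attached to the offending element pairs, designed so that they are dominated automatically by the already-forced $u_j^l$'s and hence invisible to the counting argument above. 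Proving that this chordalization can be carried out uniformly for every EX3C instance, and that the resulting graph remains both chordal and faithful to the reduction, is the technical heart of the argument.
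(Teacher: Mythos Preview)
Your reduction from EX3C and the counting in your reverse direction are broadly sensible, but the proposal has a genuine gap: chordality is never established. As you yourself acknowledge, promoting each $c_j$ to a triangle on $X$ leaves chordless cycles whenever the $3$-sets form a ``cycle of triangles,'' and your proposed remedy of ``small auxiliary vertices attached to the offending element pairs'' is neither constructed nor analyzed. It is far from clear that such vertices can be inserted uniformly for every EX3C instance while remaining invisible to both the budget and the $(1,j)$-constraints: any new neighbour of some $x_i$ changes the domination count at $x_i$, and any new vertex must itself be $(1,j)$-dominated without consuming budget. As written you do not have a chordal graph, and the ``technical heart'' you defer is in fact the entire difficulty.

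The paper sidesteps this by a different device: it makes $X$ a full clique. Since everything else in the construction is a tree hanging off a single vertex of that clique (a claw for each set $C_p$, and a forest gadget $G_i$ for each element $x_i$), chordality is immediate. The catch is that with $X$ a clique one could dominate all of $X$ cheaply by placing a single $x_i$ into $D$; the paper blocks this by attaching to each $x_i$ a gadget $G_i$ consisting of $q$ depth-two trees whose roots $w_r^i$ already receive exactly $j$ forced dominators from their children. If $x_i \in D$, every $w_r^i$ becomes over-dominated and must itself enter $D$, costing $q$ extra vertices and exceeding the budget $k = t + q + 3jq^2$. So the missing idea in your plan is not a local chordalization of the bipartite incidence structure, but a global one: turn $X$ into a clique and add element-side gadgets that force $X \cap D = \emptyset$.
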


%\begin{proof}
Clearly $(1,j)$-SET for chordal graph is in NP. We describe a polynomial reduction from 
EX$3$C to $(1,j)$-SET for chordal graphs. Given any instance $(X,C)$ of EX$3$C, we obtain a 
chordal graph $G=(V,E)$ and an integer $k$ such that EX$3$C has a solution if and only if $G$ has 
a $(1,j)$-set of cardinality at most $k$.

Let $X=\{x_1, x_2, \ldots, x_{3q}\}$ and $C=\{C_1, C_2, \ldots, C_t\}$ be an arbitrary instance of EX$3$C. 
The vertex set of the newly formed graph $G=(V,E)$ is
formed as a disjoint union of $V_1, V_2$ and $V_3$, that is, 
$$
	V = V_1 \sqcup V_2 \sqcup V_3.
$$
For each $C_p\in C$, 
$p\in [t]$, we have a claw centered at a vertex $u_p$ and let $v_p, y_p$ and $z_p$ be the pendant 
vertices of that claw. The set $V_1$ is given by 
$$
	V_1=\bigcup_{p=1}^t \left\{u_p, v_p, y_p, z_p \right\}.
$$ 
Also, we have a set 
$V_2$ of $3q$ vertices $x_1, x_2, \ldots, x_{3q}$, each corresponding to an element of $X$. Furthermore, for 
each $i\in \{1,\, 2, \ldots,\, 3q\}$, we add a gadget $G_i$, as shown in Figure~\ref{fig1}. The gadget $G_i$ is 
basically a forest of $q$ number of rooted trees of depth 2 rooted at the vertices $w_1^i, w_2^i, \ldots w_{q}^i$ 
as shown in Figure~\ref{fig2}. In the gadget $G_i$, each $w_1^i,\, w_2^i,\, \ldots,\, w_{q}^i$ has $j$ children and 
each of these $j$ children has $2$ more children. The set $V_3$ is given by 
$$
	V_3=\bigcup_{i=1}^{3q} V(G_i),
$$ 
where 
$V(G_i)$ is the vertex set of $G_i$. Now we add the edges between $x_i$ and $v_p$ if the element corresponding 
to $x_i$ is in $C_p$. Note that degree of each $v_p$ is $4$ for all $p\in \{1,\, 2,\, \cdots,\, t\}$. Also, we add 
edges between every pair of distinct vertices of $\{x_1,\, x_2,\, \ldots,\, x_{3q}\}$, making it a clique. Finally, 
for each $i\in \{1, 2, \ldots, 3q\}$, we add the edges $x_iw_1^i,\, x_iw_2^i,\, \ldots,\, x_iw_q^i$. The construction 
of $G$ from the instance $(X,C)$ of EX$3$C is illustrated in Figure~\ref{fig1}. Clearly, the graph $G$ is a 
chordal graph. Let $k=t+q+3jq^2$.

Theorem~\ref{thm-NP-harness-chordal-graphs} directly follows from the following result. 
%Hence $(1,j)$-SET is NP-complete for chordal graphs.

%\noindent{\bf Claim:}
\begin{lemma} 
	EX$3$C has a solution if and only if $G$ has a $(1,j)$-set of cardinality at most $k=t+q+3jq^2$.
\end{lemma}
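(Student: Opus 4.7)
The plan is to prove the two directions separately by a local accounting of each gadget together with one tight budget inequality. For the easy direction, given an exact cover $C'=\{C_{p_1},\ldots,C_{p_q}\}$ of $X$, I would take
$$
D \;=\; \{u_p : p \in [t]\} \;\cup\; \{v_{p_r} : r \in [q]\} \;\cup\; \{a : a \text{ is a middle-layer vertex of some } G_i\},
$$
where ``middle-layer'' means a child of some $w_\ell^i$ (equivalently, the parent of two leaves). A direct count gives $|D|=t+q+3jq^2=k$, and a vertex-by-vertex check confirms the $(1,j)$-condition: each $y_p,z_p\notin D$ sees its sole neighbour $u_p\in D$; each $v_p\notin D$ sees only $u_p\in D$; each leaf of $G_i$ sees only its middle parent in $D$; each $w_\ell^i$ sees exactly its $j$ middle children in $D$ (with $x_i\notin D$), giving count exactly $j$; and each $x_i$ sees exactly one neighbour in $D$, namely the unique $v_{p_r}$ with $x_i\in C_{p_r}$ guaranteed by exactness.

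For the converse, suppose $D$ is a $(1,j)$-set with $|D|\leq k$. I would first establish three local lower bounds: (a) $|D\cap V(K_p)|\geq 1$ for every claw $K_p$, since $y_p,z_p$ each have $u_p$ as unique neighbour, forcing $u_p\in D$ or $\{y_p,z_p\}\subseteq D$; (b) $|D\cap V(T_{i,\ell})|\geq j$ for every tree $T_{i,\ell}$ rooted at $w_\ell^i$, since each of the $j$ depth-$2$ subtrees rooted at a middle vertex contributes $\geq 1$, with equality additionally forcing $w_\ell^i\notin D$, all middle vertices in $D$, no leaves in $D$, and (via the upper-bound constraint $|N(w_\ell^i)\cap D|\leq j$) also $x_i\notin D$; (c) $v_p\in D$ forces $|D\cap V(K_p)|\geq 2$, since $v_p$ does not help dominate $y_p$ or $z_p$. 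Setting $\alpha=|D\cap V_1|-t$, $\beta=|D\cap V_3|-3jq^2$, and $\gamma=|D\cap V_2|$, one has $\alpha+\beta+\gamma\leq q$. Bound (b) says that any $x_i\in D$ makes every tree in $G_i$ cost $\geq j+1$, so $\beta\geq q\gamma$, and the budget then forces $\gamma=0$.

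With $V_2\cap D=\emptyset$, each $x_i$ must be dominated by some $w_\ell^i\in D$ or by some $v_p\in D$ with $x_i\in C_p$. Set $V^*=\{p:v_p\in D\}$, $\eta_i=|\{\ell:w_\ell^i\in D\}|$, $w=\sum_i\eta_i$, and $I=\{i:\eta_i=0\}$. Then $|I|\geq 3q-w$ and the $x_i$ with $i\in I$ are all covered by $V^*$, giving $|V^*|\geq|I|/3\geq q-w/3$. From (c), $\alpha\geq|V^*|$; and by (b) again, each $w_\ell^i\in D$ bumps its tree's excess by at least $1$, so $\beta\geq w$. Combining,
$$
q \;\geq\; \alpha+\beta \;\geq\; |V^*|+w \;\geq\; q+\frac{2w}{3},
$$
which forces $w=0$ and $|V^*|=q$; then $V^*$ covers all $3q$ elements of $X$ with $q$ triples of size $3$, which by pigeonhole must be exact.

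The main obstacle will be the clean, independent per-gadget accounting behind (b) and (c): one has to verify carefully that $v_p\in D$ and $w_\ell^i\in D$ each consume a full unit of the tight $q$-unit slack, and cannot be ``refunded'' by savings inside the same claw or tree. Once that per-gadget bookkeeping is in place, the numerical pinch between $|V^*|+w\leq q$ and $|V^*|\geq q-w/3$ essentially closes the argument for free.
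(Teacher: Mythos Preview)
Your proposal is correct and follows essentially the same approach as the paper. The forward direction is identical; for the converse, the paper makes the same two local lower bounds (at least $t$ from the claws, at least $3jq^2$ from the trees), shows $V_2\cap D=\emptyset$ by the same ``$x_i\in D$ forces each tree of $G_i$ to cost $j+1$'' argument, and then finishes with the same $3q$-versus-$q$ covering count, only phrased more tersely (each $w_\ell^i$ covers one $x_i$, each $v_p$ covers three, so covering $3q$ with $\le q$ vertices forces all to be $v_p$'s). Your explicit excess variables $\alpha,\beta,\gamma$ and the inequality $q\ge |V^*|+w\ge q+2w/3$ are a slightly more formal packaging of exactly that count.
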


\begin{proof}
Suppose the instance $(X,C)$ has a solution $C'$. Since each 
element of $X$ is covered by exactly one element of $C'$, $|C'|=q$. For each gadget 
$G_i$, $i\in [3q]$, let $S_i$ be the set of all children of $w_1^i,\, w_2^i,\, \ldots,\, w_{q}^i$. 
Clearly, each $S_i$ contains $jq$ vertices. We form a set $D$ as follows: 
$$
	D = \left\{u_i|~1\leq i\leq t\right\} \bigcup \left\{v_p|~C_p\in C' \right\} \bigcup 
	\left(\bigcup_{l=1}^{3q} S_l \right). 
$$
Since $|C'|=q$, $D$ contains $t+q+3jq^2$ 
vertices. One can easily check that $D$ forms a $(1,j)$-set of $G$.

\begin{figure}%[h]
\begin{center}
  \includegraphics[width = 16cm]{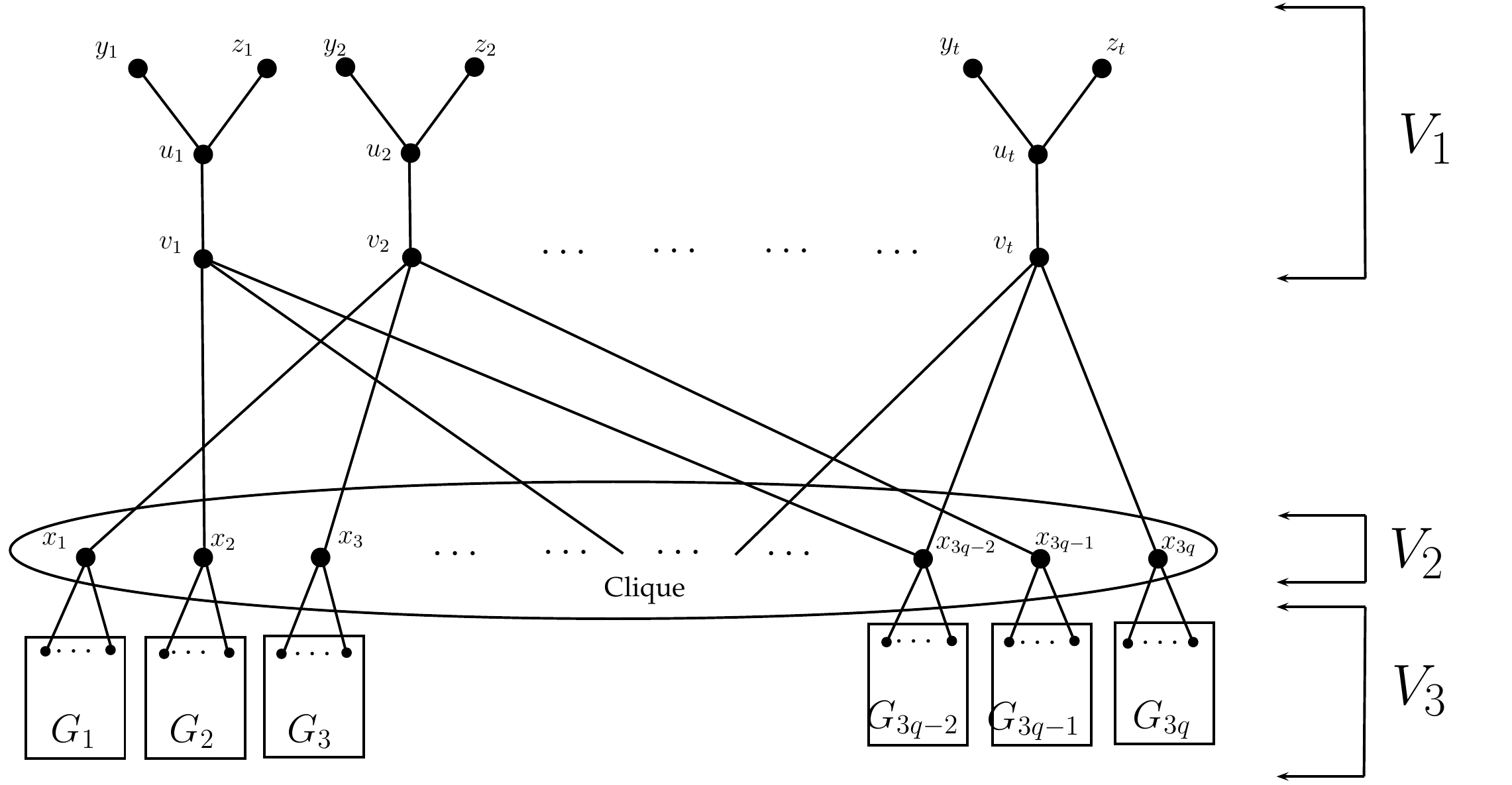}
\end{center}
%\centerline{\hbox{\psfig{figure = ChordalNPC.eps, height = 9.0cm, width = 17cm}}}
\caption{Reduction from EX$3$C to $(1,j)$-SET}
\label{fig1}
\end{figure}

\begin{figure}
\begin{center}
  \includegraphics[width = 13cm]{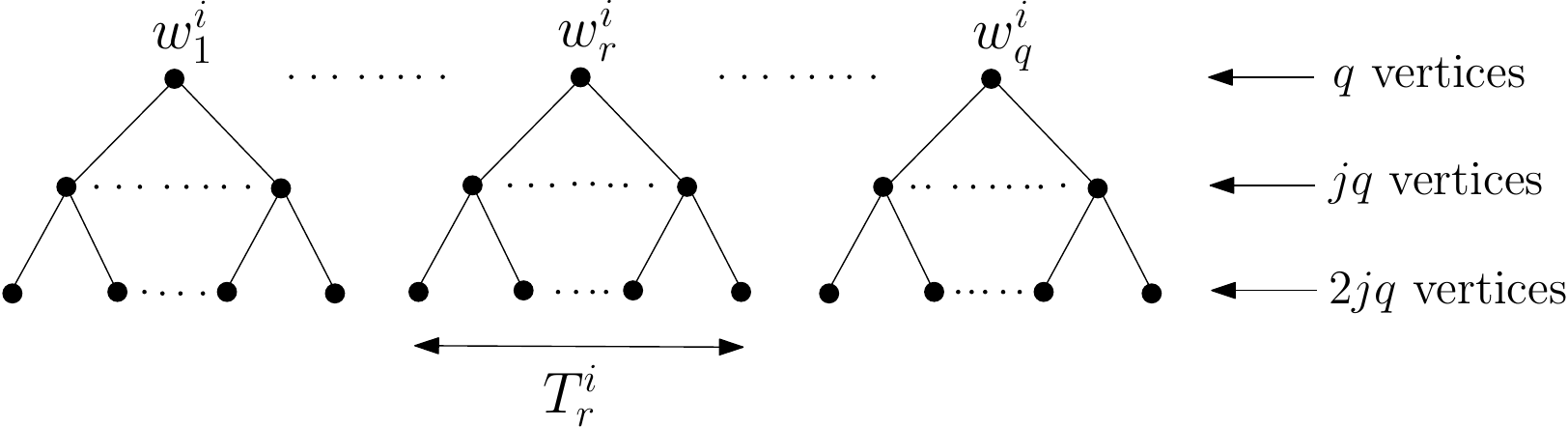}
\end{center}
%\centerline{\hbox{\psfig{figure = Gadget.eps, height = 3.0cm, width = 16cm}}}
\caption{Gadget $G_i$ corresponding to $x_i$}
\label{fig2}
\end{figure}

Conversely, suppose that $G$ has a $(1,j)$-set $D$ of cardinality at most $k=t+q+3jq^2$. 
First observe that since $D$ is a dominating set, $D$ must contain at least $t$ vertices from the 
set 
$$
  V_{4} \stackrel{\rm def}{=} \{ y_{1}, \, \dots, \, y_{t}\} \cup \{ z_{1}, \, \dots, \, z_{t}\} \cup \{ u_{1}, \, \dots, \, u_{t}\}
$$
to dominate the pendant vertices 
$$
  \{ y_{1}, \, \dots, \, y_{t}\} \cup \{ z_{1}, \, \dots, \, z_{t}\}.
$$
Similarly, for a fixed $i$ and $r$, consider the tree $T^{i}_{r}$. To dominate the 
pendant vertices of the tree $T^{i}_{r} = (V^{i}_{r}, E^{i}_{r})$ we need to select at least $j$ vertices from the set 
$V^{i}_{r} \setminus \{w^{i}_{r}\}$.
Summing up over all $i$ and $r$, we get that $D$ contains more than $3jq^{2}$ vertices from the set 
$$
  V_{5} \stackrel{\rm def}{=} \bigcup_{1\leq i \leq 3q, \, 1\leq r \leq q} \left( V^{i}_{r} \setminus \{w^{i}_{r}\} \right).
$$
Observe now that the cardinality of $D$ is at least $t+3q.jq=t+3jq^2$.

Now to complete the proof we will only have to show that $V_{2} \cap D = \emptyset$.
Since if this is the case then each $x_i$ has to be dominated by either
some $w^{i}_{l} \in G_i$ or a $v_s \in V_1$, $s\in [t]$. 
We have to dominate the $3q$ vertices of $V_2$ using at most $q$ vertices,
since we have used up the other $t+3jq^2$ vertices. Since each $w^{i}_{l}$ dominates only one $x_i$, while each $v_i \in V_1$
dominates $3$ $x_i$'s, this is possible only if there exist $q$ vertices $v_{i_1},\ldots,v_{i_q}$, which can dominate the $3q$ vertices $x_i\in V_2$.
Now define $C'$ to be the sets corresponding to these vertices, i.e. $C'= \{C_{i_1},\, \ldots,\, C_{i_q}\}$. Clearly $C'$ is an exact cover of $X$, and has
only $q$ sets.

Till now we have only used the fact that $D$ is a dominating set but for showing $D \cap V_{2} = \emptyset$
we will be crucially using the fact that $D$ is a $(1,j)$-set.
To reach a contradiction let us
suppose some $x_i \in D$. Then each $w^{i}_{r}$, $r\in [q]$ is $1$-dominated by $x_i$,
and either has to be in $D$ or can have at most $j-1$ other neighbours that are in $D$.
In either case, we get that for each tree $T^{i}_{r} \in G_i$, $|T^{i}_{r} \cap D| \geq j+1$.
Hence, $|G_i \cap D| \geq jq +q$. This implies that $|D|\geq t+1+(3q-1)(jq)+(j+1)q = t+1+3jq^2+q$,
which contradicts the assumption that $|D|\leq k$. 
Therefore $D\cap V_{2} = \emptyset$.
\end{proof}

% \textcolor{blue}{
%    Shorter proof of NP-hardness:
%    Suppose some $x_i \in D$. Then each $w^i_r$, $r\in \{1,2,\ldots,q\}$ is $1$-dominated by $x_i$,
%    and either has to be in $D$ or can have at most $j-1$ other neighbours that are in $D$.
%    In either case, we get that for each tree $T^i_r\in G_i$, $|T^i_r \cap D| \geq j+1$.
%    Hence, $|G_i \cap D| \geq jq +q$. This implies that $|D|\geq t+1+(3q-1)(jq)+(j+1)q = t+1+3jq^2+q$,
%    which contradicts the assumption that $|D|\leq k$. Therefore, each $x_i$ has to be dominated by either
%    some $w^i_r \in G_i$ or a $v_s \in V_1$, $s\in \{1,\ldots,t\}$. We have to dominate the $3q$ vertices of $V_2$ using only $q$ vertices,
%    since we have used up the other $t+3jq^2$ vertices. Since each $w^i_j$ dominates only one $x_i$, while each $v_i \in V_1$
%    dominates $3$ $x_i$'s, this is possible only if there exist $q$ vertices $v_{i_1},\ldots,v_{i_q}$, which can dominate the $3q$ vertices $x_i\in V_2$.
%    Now define $C'$ to be the sets corresponding to these vertices, i.e. $C'= \{S_{i_1},\ldots, S_{i_q}\}$. Clearly $C'$ is an exact cover of $X$, and has
%    only $q$ sets.
% }

%Hence $(1,j)$-SET is NP-complete for chordal graphs.
%\end{proof}

\begin{remark}
Following observations directly
%differently 
follow from the NP hardness reduction:
\begin{enumerate}
 \item 
    The only possibility of dominating $V_4$ by $t$ vertices is to take $\{u_1,\, u_2,\, \dots,\, u_t\}$ and 
    this set also dominates the set $\{v_1,\, v_2,\, \dots,\, v_t\}$.

 \item
    The only possibility of dominating $V_5$ by $3jq^2$ vertices is to take $\bigcup_{i=1}^{3q} S_i$ 
    and this set dominates each $w^{i}_{l}$ exactly $j$ times. Note that 
    $S_i$ is the set of all children of $w_1^i,\, w_2^i,\, \ldots,\, w_{q}^i$ and
    each $S_i$ contains $jq$ vertices.
\end{enumerate}
\end{remark}

\section{Polynomial time algorithms}
\label{sec-polynomial-time-algorithms}

\subsection{Tree}
\label{ssec-tree}

To design an efficient algorithm for finding $(1,j)$-domination number of a given tree $T$, we need the concept 
of a more generalized set, namely $M$-set of an $M$-labeled tree. In fact, we design a dynamic programming 
algorithm for finding the minimum cardinality of an $M$-set of an $M$-labeled tree. First let us define 
an $M$-labeled tree and an $M$-set.

\begin{defi}
A tree $T$ is called an \emph{$M$-labeled tree} if each vertex $v$ is associated with two nonnegative integers 
$M_a(v)$ and $M_b(v)$ such that $M_a(v)\leq M_b(v)$. A subset $S\subseteq V$ of an $M$-labeled tree $T=(V,E)$ 
is called an \emph{$M$-set} if $M_a(v)\leq |N_T(v)\cap S|\leq M_b(v)$ for every $v\in V\setminus S$. The minimum 
cardinality of an $M$-set of an $M$-labeled tree $T$ is called the \emph{$M$-domination number} of 
$T$ and is denoted by $\gamma_M(T)$.
\end{defi}

Note that if all the vertices of an $M$-labeled tree $T$ can be labeled as $M_a(v)=1$ and $M_b(v)=j$, 
then an $M$-set of $T$ is nothing but a $(1,j)$-set of the underlying tree.

The main idea of the dynamic programming algorithm is to choose a specific vertex $u$ from $T$. Any minimum 
$M$-set of $T$ should either contain $u$ or does not contain $u$. So the problem of finding the minimum 
cardinality of an $M$-set of $T$ boils down to finding two parameters: $(i)$ $\gamma_M(T, u)$, the minimum 
cardinality of an $M$-set of $T$ that contains the specific vertex $u$ and $(ii)$ $\gamma_M(T, \bar{u})$, 
the minimum cardinality of an $M$-set of $T$ that does not contain the specific vertex $u$.

Suppose $uv$ is an edge of the $M$-labeled tree $T$. Let $H_1$ and $H_2$ be the subtrees of $T$ rooted at 
$u$ and $v$ respectively. Note that $H_1$ and $H_2$ are $M$-labeled trees and the labels of the vertices of 
$H_1$ and $H_2$ remain the same as they are in $T$. Our aim is to use the parameters 
$\gamma_M(H_1, u)$, $\gamma_M(H_1, \bar{u})$, $\gamma_M(H_2, v)$, and $\gamma_M(H_2, \bar{v})$ 
(with suitable labeling $M$) to find $\gamma_M(T, u)$ and $\gamma_M(T, \bar{u})$. The following lemma 
shows how the values of $\gamma_M(T, u)$ and $\gamma_M(T, \bar{u})$ are obtained.

\begin{lemma}\label{lemtreealgo}
Let $uv$ be an edge of an $M$-labeled tree $T$ and $H_1$ and $H_2$ be the subtrees of 
$T$ rooted at $u$ and $v$ respectively. Then the following statements hold.
\begin{itemize}
\item[(a)] $\gamma_M(T,u)= \gamma_M(H_1,u)+ \gamma_{M'}(H_2)$, where the label $M'$ 
is same as $M$ except $M'_a(v)=\max \{M_a(v)-1, 0\}$ and $M'_b(v)=\max \{M_b(v)-1, 0\}$

\item[(b)] $\gamma_M(T, \bar{u})= \min \{ \gamma_M(H_1, \bar{u})+ \gamma_M(H_2, \bar{v}), \gamma_{M'}(H_1, \bar{u})+ \gamma_M(H_2, v) \}$, 
where the label $M'$ is same as $M$ except $M'_a(u)=\max \{M_a(u)-1, 0\}$ and $M'_b(u)=\max \{M_b(u)-1, 0\}$.

\end{itemize}
\end{lemma}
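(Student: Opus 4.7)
The plan is to exploit the fact that removing the single edge $uv$ disconnects $T$ into $H_1 \ni u$ and $H_2 \ni v$, so that for every vertex $x \notin \{u,v\}$ one has $N_T(x) = N_{H_1}(x)$ or $N_T(x) = N_{H_2}(x)$ according to which component $x$ lies in. Consequently, the only $M$-labels whose constraints need to be re-examined when passing between $T$ and the disjoint union $H_1 \sqcup H_2$ are those at the two endpoints $u$ and $v$. Both parts of the lemma then reduce to a short case analysis on whether $u$ and $v$ lie in the candidate $M$-set, with bookkeeping for the single extra neighbor that the missing edge $uv$ contributes.

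For part (a), I would pick an optimal $M$-set $S$ of $T$ with $u \in S$, partition it as $S = S_1 \sqcup S_2$ with $S_i = S \cap V(H_i)$, and argue both inequalities. One direction shows that $S_1$ is an $M$-set of $H_1$ containing $u$: the constraint at $u$ is vacuous since $u \in S$, and every other vertex of $H_1$ has identical neighborhood in $T$ and in $H_1$. The other direction shows that $S_2$ is an $M'$-set of $H_2$. When $v \in S_2$ the constraints at $v$ are vacuous in both labelings, so $S_2$ is trivially an $M'$-set; when $v \notin S_2$, the identity $|N_T(v) \cap S| = |N_{H_2}(v) \cap S_2| + 1$, which holds because $u \in S$, converts the $T$-constraint at $v$ exactly into the shifted constraint demanded by $M'$. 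The converse direction — gluing an $M$-set of $H_1$ containing $u$ to an $M'$-set of $H_2$ to produce an $M$-set of $T$ containing $u$ — uses the same identities in reverse, and taking minima yields (a).

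For part (b), the same decomposition applies but now with $u \notin S$, so the only vertex whose $T$-constraint is not directly inherited from its subtree constraint is $u$ itself. The natural case split is on whether $v \in S_2$ or not. If $v \notin S_2$, then $|N_T(u) \cap S| = |N_{H_1}(u) \cap S_1|$ and $|N_T(v) \cap S| = |N_{H_2}(v) \cap S_2|$, so $S_1$ is an $M$-set of $H_1$ avoiding $u$ and $S_2$ is an $M$-set of $H_2$ avoiding $v$, yielding the first term of the minimum. If $v \in S_2$, then the $+1$ contribution of $v$ to $u$'s $T$-neighborhood count is precisely absorbed by the $M'$-shift at $u$, so $S_1$ is an $M'$-set of $H_1$ avoiding $u$ while $S_2$ is an $M$-set of $H_2$ containing $v$, yielding the second term. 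Matching both directions and taking the minimum over the two subcases gives (b).

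The main technical subtlety, and the one step I would write out most carefully, is justifying the $\max\{\cdot,0\}$ clamps in the definition of $M'$. One has to verify that the clamp only relaxes constraints in regimes where either the corresponding vertex is forced into $S$ (making its constraint vacuous and the relaxation harmless) or the unclamped lower bound $M_a(\cdot) - 1$ was already non-positive (so clamping to $0$ changes nothing). Everywhere else, the remaining verification is mechanical neighborhood accounting: listing which neighbors of $u$ and $v$ fall in $S_1$ versus $S_2$, and confirming that the shifted bounds match the unshifted ones after adding or subtracting the single edge's contribution.
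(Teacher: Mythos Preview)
Your proposal is correct and follows essentially the same route as the paper's own proof: split an optimal set along the edge $uv$, do a case analysis on membership of $v$ (for part~(a)) or of $v$ with $u$ excluded (for part~(b)), and then reverse the construction to get the other inequality. Your extra paragraph on justifying the $\max\{\cdot,0\}$ clamp is a detail the paper simply glosses over, but otherwise the arguments coincide.
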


\begin{proof}
(a) Let $D$ be a minimum cardinality $M$-set of $T$ containing $u$. Let $D_1= V(H_1)\cap D$ and 
$D_2= V(H_2)\cap D$. Clearly $D_1$ is an $M$-set of $H_1$ containing $u$. Now, $D_2$ may or may 
not contain the vertex $v$.
\begin{description}
\item[Case $v\in D_2$:] In this case, $D_2$ is an $M'$-set of $H_2$ containing $v$.

\item[Case $v\notin D_2$:] In this case, $D_2$ is an $M'$-set of $H_2$ not containing $v$.
\end{description}
Since $\gamma_{M'}(H_2)=\min \{\gamma_{M'}(H_2, v), \gamma_{M'}(H_2, \bar{v})\}$, 
we have $\gamma_M(H_1,u)+ \gamma_{M'}(H_2) \leq \gamma_M(T,u)$.

On the other hand, let $D_1$ be a minimum cardinality $M$-set of $H_1$ containing $u$ and $D_2$ be a 
minimum cardinality $M'$-set of $H_2$. Let $D=D_1\cup D_2$. Clearly, whatever be the case 
($v\in D_2$ or $v\notin D_2$), we can verify that $D$ is a $M$-set of $T$ and $u\in D$. Hence, 
$\gamma_M(T,u)\leq \gamma_M(H_1,u)+ \gamma_{M'}(H_2)$.

Thus we have, $\gamma_M(T,u)= \gamma_M(H_1,u)+ \gamma_{M'}(H_2)$.

(b) Let $D$ be a minimum cardinality $M$-set of $T$ not containing $u$. Let 
$D_1= V(H_1)\cap D$ and $D_2= V(H_2)\cap D$. Now, $D_2$ may or may not contain the vertex $v$.
\begin{description}
\item[Case $v\notin D$:] In this case, $D_1$  is an $M$-set of $H_1$ not containing $u$ and $D_2$ is an 
$M$-set of $H_2$ not containing $v$. Hence, $\gamma_M(H_1, \bar{u})+ \gamma_M(H_2, \bar{v})\leq \gamma_M(T, \bar{u})$.

\item[Case $v\in D$:] In this case, $D_1$  is an $M'$-set of $H_1$ not containing $u$ and $D_2$ 
is an $M$-set of $H_2$ containing $v$. Hence, $\gamma_{M'}(H_1, \bar{u})+ \gamma_M(H_2, v)\leq \gamma_M(T, \bar{u})$.
\end{description}
So we have, $\min \{ \gamma_M(H_1, \bar{u})+ \gamma_M(H_2, \bar{v}), \gamma_{M'}(H_1, \bar{u})+ \gamma_M(H_2, v) \} \leq \gamma_M(T, \bar{u})$.

On the other hand, for showing 
$\gamma_M(T, \bar{u})\leq \min \{ \gamma_M(H_1, \bar{u})+ \gamma_M(H_2, \bar{v}), \gamma_{M'}(H_1, \bar{u})+ \gamma_M(H_2, v) \}$, 
we have the following two cases:

\begin{description}
\item[Case $\min \{ \gamma_M(H_1, \bar{u})+ \gamma_M(H_2, \bar{v}), \gamma_{M'}(H_1, \bar{u})+ \gamma_M(H_2, v) \}= \gamma_M(H_1, \bar{u})+ \gamma_M(H_2, \bar{v})$:]

Let $D_1$ be a minimum cardinality $M$-set of $H_1$ not containing $u$ and $D_2$ be a 
minimum cardinality $M$-set of $H_2$ not containing $v$. Let $D=D_1\cup D_2$. We can easily 
verify that $D$ a minimum cardinality $M$-set of $T$ not containing $u$. So, 
$\gamma_M(T, \bar{u})\leq \gamma_M(H_1, \bar{u})+ \gamma_M(H_2, \bar{v})$.

\item[Case $\min \{ \gamma_M(H_1, \bar{u})+ \gamma_M(H_2, \bar{v}), \gamma_{M'}(H_1, \bar{u})+ \gamma_M(H_2, v) \}=\gamma_{M'}(H_1, \bar{u})+ \gamma_M(H_2, v)$:]

In this case, similarly we can show that $\gamma_M(T, \bar{u})\leq \gamma_{M'}(H_1, \bar{u})+ \gamma_M(H_2, v)$.
\end{description}

Hence in both the cases, $\gamma_M(T, \bar{u})\leq \min \{ \gamma_M(H_1, \bar{u})+ \gamma_M(H_2, \bar{v}), \gamma_{M'}(H_1, \bar{u})+ \gamma_M(H_2, v) \}$.

Thus we have, $\gamma_M(T, \bar{u})= \min \{ \gamma_M(H_1, \bar{u})+ \gamma_M(H_2, \bar{v}), \gamma_{M'}(H_1, \bar{u})+ \gamma_M(H_2, v) \}$.
\end{proof}

Based on the above lemma, we have the following dynamic programming algorithm for finding 
$\gamma_M(T)$ for an $M$-labeled tree $T$. Note that, a tree with a single vertex forms the base case at 
which $\gamma_M(T)$ can be easily computed depending upon the $M$ label.

\begin{algorithm}\label{algotree}
\KwIn{A $M$-labeled tree $T=(V,E)$.}
\KwOut{A minimum cardinality of an $M$-set of $T$, i.e., $\gamma_M(T)$.}
\Begin{
Select a vertex $u$ from $V$;\\
Select an edge $uv$ from $E$;\\
Calculate $\gamma_M(T, u)$ and $\gamma_M(T, \bar{u})$ according to Lemma~\ref{lemtreealgo};\\
$\gamma_M(T)= \min\{\gamma_M(T, u), \gamma_M(T, \bar{u})\}$;\\
\Return $\gamma_M(T)$;

}

\caption{Min\_M-set\_Tree}
\end{algorithm}

The correctness of Algorithm \ref{algotree} is based on Lemma~\ref{lemtreealgo}. 
Since the dynamic programming runs over the edges of the given tree, Algorithm~\ref{algotree} take linear time. 
Also, as noted earlier, if we initialize the $M$-label as $M_a(v)=1$ and $M_b(v)=j$ for all $v\in V$, 
then $\gamma_{(1,j)}(T)=\gamma_M(T)$. Hence we have the following theorem.

\begin{theo}
The $(1,j)$-domination number of a given tree can be computed in linear time.
\end{theo}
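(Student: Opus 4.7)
The plan is to observe that a $(1,j)$-set of a tree $T$ is exactly an $M$-set under the uniform labelling $M_a(v)=1$, $M_b(v)=j$, so that $\gamma_{(1,j)}(T)=\gamma_M(T)$. Hence it suffices to show that $\gamma_M(T)$ can be computed in time $O(n)$ for any $M$-labelled tree $T$, which is precisely what Algorithm~\ref{algotree} does once Lemma~\ref{lemtreealgo} is available. So the entire task reduces to (i) verifying a base case, (ii) iterating the two recurrences of Lemma~\ref{lemtreealgo} over the whole tree, and (iii) bounding the total work.

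For (i), when $T$ is a single vertex $v$ we read off $\gamma_M(\{v\},v)=1$ and $\gamma_M(\{v\},\bar v)\in\{0,+\infty\}$ from $M_a(v)$ directly. For (ii), I would root $T$ at an arbitrary vertex $r$ and process its vertices in post-order. For each internal $v$ with children $c_1,\ldots,c_k$, I would build the subtree rooted at $v$ incrementally, attaching $c_1,\ldots,c_k$ one at a time, and at each step use Lemma~\ref{lemtreealgo} on the newly-added edge $vc_i$ to update the pair $(\gamma_M(T_v^{(i)},v),\gamma_M(T_v^{(i)},\bar v))$ from the already-stored pair for $T_v^{(i-1)}$ and the computed pair for $T_{c_i}$. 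At the root, the desired value is $\min\{\gamma_M(T_r,r),\gamma_M(T_r,\bar r)\}$. Correctness is then a straightforward induction on the number of attached edges, with Lemma~\ref{lemtreealgo} providing the inductive step.

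For (iii), each edge of $T$ participates in exactly one invocation of the lemma, and each such invocation performs only a constant number of additions and one comparison, giving total work $O(n)$ for fixed $j$. The main thing to be careful about, and where I expect the bulk of the bookkeeping to lie, is tracking the residual $M$-labels: whenever the attached child is forced into $D$ the label at $v$ must be replaced by $M'$, decrementing by one, and symmetrically on the child side in case~(b) of the lemma. Because the labels can only decrease monotonically as children are attached, this is handled in $O(1)$ per edge by storing a running counter at each vertex recording how many already-attached neighbours have been placed in $D$. With this invariant maintained, the two recurrences of Lemma~\ref{lemtreealgo} compose cleanly across the post-order traversal, yielding $\gamma_{(1,j)}(T)$ in linear time.
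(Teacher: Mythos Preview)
Your proposal is correct and follows essentially the same approach as the paper: reduce $(1,j)$-domination to the $M$-set problem with uniform labels, invoke Lemma~\ref{lemtreealgo} edge by edge, and observe that each edge contributes constant work. The paper's own justification is in fact terser than yours---it simply states that ``the dynamic programming runs over the edges of the given tree'' and that the base case is a single vertex---so your post-order description and explicit tracking of residual labels actually fill in details the paper leaves implicit.

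One small caution on the bookkeeping: a single ``running counter'' at $v$ is not quite the right data structure, because which children lie in $D$ is itself part of the optimisation, not a fixed input. What you actually need at each partial subtree $T_v^{(i)}$ is the value $\gamma_{M^{(t)}}(T_v^{(i)},\bar v)$ for each relevant decrement $t\in\{0,1,\ldots,j\}$ of the label at $v$ (equivalently, the minimum cost stratified by how many already-attached children are in $D$). Since $j$ is fixed this is still $O(1)$ states per vertex and your linear-time claim stands; the paper glosses over this point as well.
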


\subsection{Split graph}

In this subsection, we design an algorithm which finds $(1,j)$-domination number for a given split graph $G$ 
in polynomial time. This algorithm is important because most of the domination type problems like 
domination~\cite{domsplit}, total domination~\cite{tdomsplit}, $k$-tuple domination~\cite{kdomsplit} etc.  
are NP-complete for split graphs.

Let the vertex set $V$ of a split graph $G=(V,E)$ is partitioned into a clique $K$ and an 
independent set $S$, i.e., 
$V=K\cup S$. Also assume that $|K|=n_1$ and $|S|=n_2$. Note that in finding a minimum $(1,j)$-set, $j$ can be 
considered as a constant. Now if $n_1\leq j$, then we are done. Because, in that case, we consider all possible 
subsets of $K$ and based on the neighborhood set of these subsets we can find a minimum cardinality $(1,j)$-set. 
Since $j$ is a constant, the number of subsets of $K$ is bounded by a constant (this constant is huge, $2^j$). 
This implies that, in this case, we can find a minimum $(1,j)$-set in polynomial time. Hence we assume that $j< n_1$. 
The idea of the algorithm is based on a simple fact that if a $(1,j)$-set, say $D$, contains more than $j$ but 
less than $n_1$ vertices from $K$, then there exists a vertex in $K\setminus D$ which is dominated by more than 
$j$ vertices, which is a contradiction to the definition of $(1,j)$-set. Hence we have the following observation.

\begin{obs}\label{obsalgosplit}
Every $(1,j)$-set of a given split graph $G$ contains only $i$ vertices from $K$ where $i\in \{0,1,2,\ldots, j, n_1\}$.
\end{obs}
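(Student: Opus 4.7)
The plan is to prove this by contradiction, exploiting the only structural feature of a split graph that matters here: every two vertices of $K$ are adjacent. The observation asserts that the intersection $|D \cap K|$ of any $(1,j)$-set $D$ with the clique is forced to lie in $\{0,1,\ldots,j\} \cup \{n_1\}$; equivalently, once $D$ selects more than $j$ clique vertices, it must actually select all of them.

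First I would suppose, for contradiction, that some $(1,j)$-set $D$ satisfies $|D \cap K| = i$ with $j < i < n_1$. The strict inequality $i < n_1$ guarantees that $K \setminus D \neq \emptyset$, so I can pick a witness $v \in K \setminus D$. Since $K$ is a clique, $v$ is adjacent in $G$ to every other vertex of $K$, and in particular to each of the $i$ vertices of $D \cap K$. Therefore $|N_G(v) \cap D| \geq i > j$. But $v \notin D$, so the definition of a $(1,j)$-set demands $|N_G(v) \cap D| \leq j$, a contradiction. Hence no such $i$ can occur, and the only admissible values of $|D \cap K|$ are those in $\{0,1,\ldots,j,n_1\}$.

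I do not anticipate a real obstacle: the entire argument rests on a single use of the clique property combined with the upper-bound half of the $(1,j)$-set definition, and the independent set $S$ plays no role at all. The only edge cases worth checking are $i = j+1$ (smallest forbidden value) and $i = n_1 - 1$ (largest forbidden value); in both situations there is at least one vertex $v \in K \setminus D$, and that vertex still has $i > j$ clique-neighbors inside $D$, so the contradiction is uniform across the entire forbidden range $(j, n_1)$. This justifies the case-split used later in the split-graph algorithm, where one only needs to try $|D \cap K| \in \{0,1,\ldots,j,n_1\}$.
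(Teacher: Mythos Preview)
Your argument is correct and is essentially identical to the paper's: the paper also supposes $j < |D\cap K| < n_1$, picks a vertex of $K\setminus D$, and notes it is dominated by more than $j$ clique vertices, contradicting the $(1,j)$-set condition. There is nothing to add.
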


Now, for each $i=0, 1, 2,\ldots, j$ and $n_1$, we find a minimum cardinality $(1,j)$-set $D$ of $G$ such that 
$|K\cap D|=i$. Finally we pick the minimum cardinality $(1,j)$-set among these $j+2$ types of $(1,j)$-sets of $G$. 
Hence the main task in this algorithm is to find a minimum cardinality $(1,j)$-set $D$ of $G$ such that 
$|K\cap D|=i$ for each $i=0, 1, 2,\ldots,j$ and $n_1$. The following lemma gives a complete characterization 
of these $j+2$ types of $(1,j)$-sets.

\begin{lemma} \label{lemsplitalgo}
Let the vertex set $V$ of a connected split graph $G=(V,E)$ is partitioned into a clique $K$ and an independent 
set $S$, i.e., $V=K\cup S$ and $|K|=n_1$ and $|S|=n_2$. Let $D$ be a minimum $(1,j)$-set of $G$. Then the 
following statements are true.
\begin{itemize}
\item[(a)] If $K\cap D=\emptyset$, then $d_G(v)\in \{n_1, n_1+1, \ldots, n_1+j-1\}$ for all $v\in K$. 
In this case, $D=S$ is the only $(1,j)$-set of $G$.

\item[(b)] For all $i\in [j-1]$, if $K\cap D=\{v_1, \ldots, v_i\}=K_i$, then 
$d_{G[K\cup S_i]}(v) \in \{n_1-1, n_1, \ldots, n_1+(j-i)-1\}$ for all $v\in K\setminus K_i$, where $S_i= S\setminus N_G(K_i)$. 
In this case, $D=K_i\cup S_i$ is a minimum cardinality $(1,j)$-set of $G$ containing $K_i$.

\item[(c)] If $K\cap D=\{v_1, v_2, \ldots, v_j\}=K_j$, then $S\subset N_G(K_j)$. In this case, 
$D=\{v_1, v_2,\ldots, v_j\}$ is a $(1,j)$-set of $G$ of minimum cardinality.

\item[(d)] If $K\subseteq D$, then $S_2\subseteq D$, where $S_2=\{u\in S|~d_G(u)\geq (j+1)\}$. In this case, 
$D=K\cup S_2$ is a $(1,j)$-set of $G$ of minimum cardinality.
\end{itemize}
\end{lemma}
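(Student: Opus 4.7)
The plan is to exploit the split structure to trivialize the count $|N_G(v) \cap D|$ case by case. Two identities drive everything: (i) since $S$ is independent, $N_G(u) \subseteq K$ for every $u \in S$, so $|N_G(u) \cap D| = |N_G(u) \cap K \cap D|$; and (ii) since $K$ is a clique, every $v \in K \setminus D$ has $K \cap D$ entirely in its neighborhood, so $|N_G(v) \cap D| = |K \cap D| + |N_G(v) \cap S \cap D|$. By Observation~\ref{obsalgosplit}, it suffices to fix $i := |K \cap D| \in \{0, 1, \dots, j-1, j, n_1\}$; in each case both the structural characterization and the minimum realization will fall out of bookkeeping the $(1,j)$ bounds against (i) and (ii).

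For part (a), with $i = 0$, identity (i) forces $|N_G(u) \cap D| = 0$ for every $u \in S \setminus D$, so the lower bound forces $S \subseteq D$ and hence $D = S$. Identity (ii) applied at $v \in K$ rewrites the $(1,j)$ condition as $d_G(v) - (n_1 - 1) \in [1, j]$, which is the claimed range. For part (b), with $1 \leq i \leq j-1$ and $K \cap D = K_i$, the lower bound at $u \in S \setminus D$ reads $|N_G(u) \cap K_i| \geq 1$, which forces $S_i := S \setminus N_G(K_i) \subseteq D$. A short exchange argument then rules out any additional vertex of $D \cap S$ outside $S_i$: such a $u$ is already dominated by $K_i$, so deleting it preserves every $(1,j)$ constraint and decreases $|D|$, contradicting minimality. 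Hence $D \cap S = S_i$, and the upper bound at $v \in K \setminus K_i$ reduces, via identity (ii), to $|N_G(v) \cap S_i| \leq j - i$, i.e., $d_{G[K \cup S_i]}(v) \leq n_1 - 1 + (j - i)$, matching the stated range.

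Parts (c) and (d) follow the same template. For part (c), with $i = j$, identity (ii) at $v \in K \setminus K_j$ already contributes $j$ neighbors from $K_j$, so the upper bound forbids any $D \cap S$-neighbor of $v$; the lower bound at $u \in S \setminus D$ says $u \in N_G(K_j)$. The choice $D = K_j$ is thus a valid $(1,j)$-set precisely when $S \subseteq N_G(K_j)$, and $|D| \geq j$ is trivially attained. For part (d), with $K \subseteq D$, identity (i) at $u \in S \setminus D$ gives $|N_G(u) \cap D| = d_G(u)$, so the upper bound $d_G(u) \leq j$ forces $S_2 \subseteq D$; connectivity of $G$ supplies $d_G(u) \geq 1$ automatically, so $D = K \cup S_2$ is both valid and minimum.

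The main obstacle, and the only place the argument is not purely mechanical, is the exchange step in part (b): one must rule out a minimum $D$ containing $S$-vertices outside $S_i$ rather than just read off what is forced by the $(1,j)$ bounds. Everything else amounts to direct substitution into identities (i) and (ii), together with the trivial lower bound $|D| \geq |K \cap D|$ to certify minimality within each case.
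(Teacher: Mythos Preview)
Your proposal is correct and follows essentially the same approach as the paper: both arguments exploit the split structure to compute $|N_G(v)\cap D|$ directly in each case, deriving the degree constraints on $K\setminus K_i$ from the upper bound and the inclusion $S_i\subseteq D$ from the lower bound. Your framing via the two identities (i) and (ii) is a cleaner way to organize what the paper does ad hoc, and you are actually more careful than the paper in part (b): the paper simply asserts ``clearly $D=K_i\cup S_i$ is a minimum cardinality $(1,j)$-set,'' whereas you supply the exchange argument showing that any extra $u\in (D\cap S)\setminus S_i$ could be deleted without violating a constraint, contradicting minimality.
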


\begin{proof}
(a) In this case, since $S$ is an independent set, $D=S$. Again, since $K$ is a clique, $d_G(v)\geq n_1-1$ 
for all $v\in K$. If $d_G(v_i)= n_1-1$ for some vertex $v_i\in K$, then 
$$
	N_G(v_i)\cap D = \emptyset. 
$$
This 
is a contradiction to the definition of $(1,j)$-set. Again if $d_G(v_t)\geq n_1+j$ for some vertex $v_t\in K$, 
then 
$$
	|N_G(v_t)\cap D|\geq (j+1). 
$$
This will force $v_t$ to be in $D$, which is a contradiction to 
$K\cap D=\emptyset$. Thus, we have 
$$
	d_G(v)\in \{n_1,\, n_1+1,\, \ldots,\, n_1+j-1 \} \quad  \forall v \in K.
$$

(b) In this case, clearly $(K_i\cup S_i)\subseteq D$. Again, since $K$ is a clique, $d_{G[K\cup S_i]}(v)\geq n_1-1$ 
for all $v\in K$. If $d_{G[K\cup S_i]}(v_t)\geq n_1+(j-i)$ for some vertex $v_t\in (K\setminus K_i)$, then 
$|N_G(v_t)\cap D|\geq j$. This will force $v_t$ to be in $D$, which is a contradiction to $|K\cap D|=i$. Thus 
$$
	d_{G[K\cup S_1]}(v) \in \{n_1-1, \, n_1,\, \ldots,\, n_1+(j-i)-1\} \quad \forall v\in K\setminus \{v_1\}. 
$$
In this case, 
clearly $D=K_i\cup S_i$ is a minimum cardinality $(1,j)$-set of $G$ containing $K_i$.

(c) If possible, let $u\in S\setminus N_G(K_j)$. Clearly $u\in D$. Since $G$ is connected, there exists at least 
one vertex $v$ in $N_G(u)\setminus K_j$. Now for that vertex $v$, 
$$
	|N_G(v)\cap D|\geq j.
$$ 
This will force $v$ to 
be in $D$, which is a contradiction to $|K\cap D|=j$. Thus $S\subset N_G(K_j)$. Clearly in this case, $D=K_j$ is 
a $(1,j)$-set of $G$ of minimum cardinality.

(d) The proof is trivial and hence omitted.
\end{proof}

Based on the above lemma, we have Algorithm \ref{algosplit} that finds a minimum cardinality $(1,j)$-set 
of a given split graph $G$.

\begin{algorithm}[h]\label{algosplit}
\relsize{-1}{\KwIn{A split graph $G=(V,E)$ with $V=K\cup S$.}
\KwOut{A minimum cardinality $(1,j)$-set $D$ of $G$.}

\Begin{

\If{$d_G(v)\in \{n_1, n_1+1, \ldots, n_1+j-1\}$ for all $v\in K$}{
$D_0==S$;}
\Else
{$D_0==\emptyset$;}

\ForEach{$i\in [j-1]$}{
\ForEach{$i$ element subsets $K^i_t$ of $K$}{
\If{$d_{G[K\cup S^i_t]}(v) \in \{n_1-1, n_1,\ldots, n_1+(j-i)-1\}$ for all 
$v\in K\setminus K^i_t$, where $S^i_t= S\setminus N_G(K^i_t)$}{
$D^i_t==K^i_t\cup S^i_t$;}
\Else
{$D^i_t== \emptyset$;}
}
$D_i==D'$, where $D'$ be the minimum nonempty set among all $D^i_t$, for all $t$;
}

\ForEach{$j$ element subsets $K^j_t$ of $K$}{
\If{$S\subset N_G(K^j_t)$}{
$D_j==K^j_t;$
}
\Else
{$D_j==\emptyset$}
}

$D_{j+1}==K\cup S'$, where $S'=\{u\in S|~d_G(u)\geq (j+1)\}$;

$D==D_p$, where $D_p$ is the minimum cardinality nonempty set among all $D_i$, $0\leq i\leq (j+1)$.

\Return $D$;\\
}

\caption{Min\_(1,j)-set\_Split} }
\end{algorithm}

The correctness of Algorithm~\ref{algosplit} is based on Observation~\ref{obsalgosplit} and 
Lemma~\ref{lemsplitalgo}. Next we analyze the complexity of Algorithm~\ref{algosplit}. Note that we can 
compute $D_0$ and $D_{j+1}$ in $O(n)$ time. For each $1\leq i\leq (j-1)$, the set $D_i$ can be computed in 
$O(n^i+ i n^i \log n)$ time because in each case, we have to check all possible $i$ element subsets of $K$, 
i.e., $O(n^i)$ subsets and after that we have to assign the minimum cardinality subset to $D_i$, which takes 
$O(i n^i \log n)$ time. Hence computing all the sets $D_i$ for $0\leq i\leq (j+1)$ can be done in polynomial 
time. For $D_j$, we have to check all $j$ element subsets of $K$, i.e., $O(n^j)$, and since $j$ is a constant, 
it takes polynomial time to compute $D_j$. Also finding the minimum cardinality nonempty set in line $19$ takes 
polynomial time. Hence, Algorithm~\ref{algosplit} can be done polynomial time. Thus, we have the main theorem 
in this subsection as follows.

\begin{theo}
For any fixed $j$, the cardinality of a minimum $(1,j)$-set of a given split graph can be computed in polynomial time.
\end{theo}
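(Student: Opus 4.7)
The plan is to prove the theorem by invoking the structural results already developed, namely Observation~\ref{obsalgosplit} and Lemma~\ref{lemsplitalgo}, and then designing a brute-force-style algorithm that enumerates only polynomially many candidate configurations. First I would establish the easy cases: if $n_1 \leq j$, then the clique $K$ has at most $j$ vertices, so we can enumerate all $2^{n_1} \leq 2^{j}$ subsets of $K$ in constant time (since $j$ is fixed) and, for each such subset $T \subseteq K$, extend it with the forced independent-set vertices and check if the result is a $(1,j)$-set. So from here on, assume $n_1 > j$.

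Next, by Observation~\ref{obsalgosplit}, the only candidate values for $|K \cap D|$ in a minimum $(1,j)$-set $D$ are $0, 1, 2, \ldots, j, n_1$. This is the critical structural reduction: there are only $j+2$ possible values, so it is enough to find, for each such $i$, the minimum $(1,j)$-set $D$ with $|K \cap D| = i$, and output the smallest among them. The plan is to treat each value of $i$ separately, using part $(a)$ through part $(d)$ of Lemma~\ref{lemsplitalgo} as a template for what the set must look like.

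For $i=0$ and $i=n_1$ (parts $(a)$ and $(d)$), the candidate $D$ is uniquely determined: either $D = S$ (checking the degree condition), or $D = K \cup \{u \in S : d_G(u) \geq j+1\}$. Both can be computed and verified in $O(n+m)$ time. For $i \in [j-1]$ (part $(b)$), we must try every $i$-element subset $K_i \subseteq K$; for fixed $j$, there are $O(n_1^{j-1}) = O(n^{j-1})$ such subsets, and for each, computing $S_i = S \setminus N_G(K_i)$, verifying the degree condition on $K \setminus K_i$ in $G[K \cup S_i]$, and (if valid) forming $D = K_i \cup S_i$, takes polynomial time. We retain the smallest valid $D$ across all such $K_i$. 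For $i=j$ (part $(c)$), we try every $j$-element subset $K_j \subseteq K$, of which there are $O(n^j)$, and check whether $S \subseteq N_G(K_j)$; if so, $K_j$ itself is a candidate $(1,j)$-set.

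Finally, compare the $j+2$ minimum candidates obtained for each value of $i$ and return the smallest. Correctness follows because Lemma~\ref{lemsplitalgo} asserts that every minimum $(1,j)$-set falls into exactly one of these $j+2$ categories, and within each category the stated characterization exhausts all minimum solutions. The running time is dominated by the $i=j$ enumeration at $O(n^j \cdot \mathrm{poly}(n))$, which is polynomial for fixed $j$. I do not expect any real obstacle here, since Lemma~\ref{lemsplitalgo} has already done the structural heavy lifting; the only step that requires some care is the verification inside the loops — in particular making sure that, in part $(b)$, when one computes $S_i$ and the induced subgraph $G[K \cup S_i]$, the stated degree condition precisely captures when $K_i \cup S_i$ is a valid $(1,j)$-set, and that no smaller $(1,j)$-set with $|K \cap D| = i$ has been missed.
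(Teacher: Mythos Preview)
Your proposal is correct and follows essentially the same approach as the paper: you invoke Observation~\ref{obsalgosplit} to restrict $|K\cap D|$ to the $j+2$ values $\{0,1,\ldots,j,n_1\}$, use Lemma~\ref{lemsplitalgo} parts (a)--(d) to characterize the minimum $(1,j)$-set in each case, enumerate the $O(n^i)$ candidate subsets $K_i$ for each fixed $i\le j$, and take the overall minimum. This is precisely the content of Algorithm~\ref{algosplit} and its accompanying complexity analysis in the paper.
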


\section{Concluding remarks}

In this paper, we have obtained an upper bound on $(1,j)$-domination number. We have shown that $(1,j)$-SET 
is NP-complete for chordal graphs. We have also designed two algorithms for finding $\gamma_{(1,j)}(G)$ of a 
tree and a split graph. In~\cite{12set}, the authors constructed a special type of split graph $G$ with $n$ 
vertices for which $\gamma_{(1,2)}(G)=n$. Lemma~\ref{lemsplitalgo} gives a more general type of split graph 
for which $\gamma_{(1,j)}(G)=n$. It actually characterizes the split graphs with $n$ vertices having 
$\gamma_{(1,j)}(G)=n$. The characterization is as follows:
\begin{coro}
Let $G$ be a split graph with $V=K\cup S$ and $|K|=n_1$, $|S|=n_2$. Then $\gamma_{(1,j)}(G)=n$ if and 
only if the following conditions hold.
\begin{itemize}
\item[(i)] There exists at least one vertex $v$ in $K$ such that $d_G(v)\geq n_1+j$.
\item[(ii)] For all $i\in [j-1]$ and for each $i$ element subset $K_i=\{v_1, \ldots, v_i\}$ of $K$, 
there exists some vertex $v_t\in K\setminus K_i$ such that $d_{G[K\cup S_i]}(v_t)\geq n_1+(j-i)$, where $S_i= S\setminus N_G(K_i)$.
\item[(iii)] For each $j$ element subset $K_j$ of $K$, $S\not\subset N_G(K_j)$.
\item[(iv)] For every $u\in S$, $d_G(u)\geq (j+1)$.
\end{itemize}
\end{coro}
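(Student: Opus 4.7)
The plan is to derive this corollary directly from Observation~\ref{obsalgosplit} and Lemma~\ref{lemsplitalgo} by matching each condition to one of the four cases of the lemma. By Observation~\ref{obsalgosplit}, any $(1,j)$-set $D$ of $G$ must satisfy $|K\cap D|\in\{0,1,\ldots,j,n_1\}$, so $\gamma_{(1,j)}(G)=n$ is equivalent to asserting that none of these $j+2$ cases can yield a $(1,j)$-set of size strictly less than $n$. The claim is that conditions (i)--(iv) are exactly the obstructions required to rule out, respectively, cases (a), (b), (c), and (d) of Lemma~\ref{lemsplitalgo} from producing a smaller set.

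For the necessity direction, I would argue by contrapositive: if some condition fails, I exhibit a $(1,j)$-set of size less than $n$. If (i) fails then every $v\in K$ has $d_G(v)\leq n_1+j-1$; assuming the typical case $d_G(v)\geq n_1$ for all $v\in K$, Lemma~\ref{lemsplitalgo}(a) shows $D=S$ is a $(1,j)$-set of size $n_2<n$. If (ii) fails at some $i\in[j-1]$ with witness $K_i$, Lemma~\ref{lemsplitalgo}(b) gives $D=K_i\cup S_i$ of size $i+|S_i|\leq i+n_2<n$. If (iii) fails at some $j$-subset $K_j$ with $S\subset N_G(K_j)$, then by Lemma~\ref{lemsplitalgo}(c) $D=K_j$ is a $(1,j)$-set of size $j<n$. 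If (iv) fails, some $u_0\in S$ has $d_G(u_0)\leq j$, and Lemma~\ref{lemsplitalgo}(d) yields $D=K\cup S_2$ with $S_2=\{u\in S:d_G(u)\geq j+1\}\subsetneq S$, whence $|D|\leq n-1<n$.

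For sufficiency, assume all four conditions hold and let $D$ be any $(1,j)$-set of $G$. By Observation~\ref{obsalgosplit}, $|K\cap D|\in\{0,1,\ldots,j,n_1\}$. Condition (i) together with the degree characterization in Lemma~\ref{lemsplitalgo}(a) rules out $|K\cap D|=0$; condition (ii) combined with Lemma~\ref{lemsplitalgo}(b) rules out each $|K\cap D|=i$ for $i\in[j-1]$; condition (iii) combined with Lemma~\ref{lemsplitalgo}(c) rules out $|K\cap D|=j$. Hence $|K\cap D|=n_1$, so $K\subseteq D$, and Lemma~\ref{lemsplitalgo}(d) forces $S_2\subseteq D$. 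But condition (iv) means $S_2=S$, so $D\supseteq K\cup S=V$ and $|D|=n$. Since $V$ itself is trivially a $(1,j)$-set, this gives $\gamma_{(1,j)}(G)=n$.

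The main obstacle is the borderline situation in the necessity argument when (i) fails yet some $v\in K$ has $d_G(v)=n_1-1$, i.e.\ $v$ has no neighbour in $S$, so Lemma~\ref{lemsplitalgo}(a) does not directly provide $D=S$ as a $(1,j)$-set. I would handle this by observing that in this situation either $v$ can be dominated by choosing a small subset of $K\setminus\{v\}$, or else one may reduce to the subgraph obtained by isolating such $v$'s; the analysis for cases (ii)--(iv) remains unaffected because those conditions are phrased in terms of degrees inside the relevant induced subgraphs $G[K\cup S_i]$, so the bookkeeping carries through unchanged.
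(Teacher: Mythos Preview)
Your overall strategy---reading off each of the four conditions from the corresponding case of Lemma~\ref{lemsplitalgo} via Observation~\ref{obsalgosplit}---is precisely what the paper intends; the paper itself offers no formal proof beyond the remark that conditions (i)--(iii) force $K\subseteq D$ and (iv) then forces $S\subseteq D$. Your sufficiency argument is correct and matches this, and your necessity arguments for (ii), (iii), (iv) are fine: in those cases the lower degree bound in Lemma~\ref{lemsplitalgo} is automatic (since $K$ is a clique, $d_{G[K\cup S_i]}(v)\ge n_1-1$ always), so the negation of the corollary's condition really is equivalent to the lemma's degree hypothesis, and the indicated set is a genuine $(1,j)$-set of size $<n$.

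The gap you flag for condition~(i), however, is real and your proposed workaround does not close it. The failure of~(i) gives only the \emph{upper} bound $d_G(v)\le n_1+j-1$ for all $v\in K$; Lemma~\ref{lemsplitalgo}(a) also needs the lower bound $d_G(v)\ge n_1$, i.e.\ every clique vertex must have a neighbour in $S$. Your suggestion of ``choosing a small subset of $K\setminus\{v\}$'' or ``reducing to a subgraph'' cannot be made to work in general, because the statement as literally written is false without an extra hypothesis. For instance, take $j=2$, $K=\{v_0,v_1,\dots,v_{10}\}$, $S=\{s_1,\dots,s_5\}$, with $v_0$ having no neighbour in $S$ and the $s_i$ attached cyclically so that each $s_i$ has exactly three neighbours among $v_1,\dots,v_{10}$ while each $v_k$ ($k\ge 1$) has at most two neighbours in $S$ (e.g.\ $s_1\sim v_1,v_2,v_3$; $s_2\sim v_3,v_4,v_5$; \dots; $s_5\sim v_9,v_{10},v_1$). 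One checks directly that (ii), (iii), (iv) hold, and your own sufficiency case-analysis then yields $\gamma_{(1,2)}(G)=n$; yet (i) fails since no clique vertex has more than two neighbours in $S$. The clean repair is to assume the split partition has $S$ a maximal independent set (equivalently, every $v\in K$ has a neighbour in $S$); under that convention $\neg(i)$ coincides with the full degree condition of Lemma~\ref{lemsplitalgo}(a) and your contrapositive goes through without further work.
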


Condition $(i), (ii)$ and $(iii)$ actually force all the vertices of $K$ in a minimum cardinality $(1,j)$-set 
$D$ and condition $(iv)$ forces all vertices of $S$ in $D$. Using this type of split graphs, we can construct 
a graph $G$ (not a split graph) having $\gamma_{(1,j)}(G)=n$. The construction is as follows: Let 
$$
	G_1=(V_1, E_1),\, G_2=(V_2, E_2),\, \ldots,\, G_p=(V_p, E_p)
$$ 
be $p$ split graphs having partitions 
$$
	V_1=K_1\cup S_1,\, V_2=K_2\cup S_2,\, \ldots,\, V_p=K_p\cup S_p. 
$$
The vertex set of the constructed graph $G=(V,E)$ 
is given by 
$$
	V=\bigcup_{i=1}^p V_i
$$ 
and the edge set is given by 
$$
	E = \left( \bigcup_{i=1}^p E_i \right) \bigcup E',
$$ 
where $E'$ is an 
arbitrary edge set between the vertices of 
$$
	K_1,\, K_2,\, \ldots,\, K_p.
$$ 
We can easily verify that $\gamma_{(1,j)}(G)= |V|$. 
But characterizing the graphs with $n$ vertices having $\gamma_{(1,j)}(G)=n$ 
seems to be an interesting but difficult question. 
Also, it would be interesting to study the open problems mentioned in~\cite{12set}.

\subsection*{Acknowledgements}

Kunal Dutta and Arijit Ghosh are 
supported by the Indo-German Max Planck Center for Computer Science (IMPECS).
Subhabrata Paul is supported by the Indian Statistical Institute, Kolkata.

\bibliographystyle{alpha}
\addcontentsline{toc}{section}{Bibliography}
\bibliography{1-j}

\end{document}